\newcommand\subsubsubsection{\@startsection{paragraph}{4}{\z@}%
            {-2.5ex\@plus -1ex \@minus -.25ex}%
            {1.25ex \@plus .25ex}%
            {\normalfont\normalsize\bfseries}}
\DeclareMathAlphabet{\mathpzc}{OT1}{pzc}{m}{it}
\definecolor{dgreyblue}{rgb}{0.26,0.3,0.46}             
\newcommand{\cA}{\mathcal{A}}
\newcommand{\cC}{\mathcal{C}}
\renewcommand{\text}[1]{\hbox{\rm \ #1\ \/}}
\newcommand{\be}[1]{\begin{equation}\label{#1}}
\newcommand{\ee}{\end{equation}}
\newcommand{\beqn}{\begin{eqnarray*}}
\newcommand{\eeqn}{\end{eqnarray*}}
\newcommand{\beq}{\begin{eqnarray}}
\newcommand{\eeq}{\end{eqnarray}}
\newcommand{\ben}{\begin{enumerate}}
\newcommand{\een}{\end{enumerate}}
\newcommand{\bi}{\begin{itemize}}
\newcommand{\ei}{\end{itemize}}
\newcommand{\eps}{\varepsilon}
\newcommand{\IE}{{\em i.e.}\xspace}
\newcommand{\tx}{^{\rm th}}
\newtheorem{claim}{Claim}
\newtheorem{theorem}{Theorem}
\newtheorem{remark}{Remark}
\newtheorem{lemma}[theorem]{Lemma}
\renewenvironment{proof}{{\noindent\bf Proof.\ }}{\hfill{\Pisymbol{pzd}{113}}\vspace{0.1in}}
\newenvironment{proof-sketch}{{\noindent\bf Sketch of Proof.\ }}{\hfill{\Pisymbol{pzd}{113}}\vspace{0.1in}}
\newcommand{\NP}{\mathsf{NP}}
\newcommand{\cS}{\mathcal{S}}
\newcommand{\cP}{\mathcal{P}}
\newcommand{\EA}{{\em et al.}\xspace}
\newcommand{\TB}{\vspace{-0.1ex}}\newcommand{\TiE}{\setlength{\itemsep}{-1ex}}
\newcommand{\comment}[1]{}
\newcommand{\EG}{{\it e.g.}\xspace}
\newcommand{\FI}[1]{Fig.~\ref{#1}\xspace}
\newcommand{\mispc}{\textsc{Mis}$_{\mathrm{PC}}$}
\definecolor{columbiablue}{rgb}{0.61, 0.87, 1.0}
\newcommand{\aalpha}{{\mathsf{PartyA}}}
\newcommand{\bbeta}{{\mathsf{PartyB}}}
\newcommand{\eeta}{{\mathsf{Pop}}}
\newcommand{\effgap}{{\mathsf{Effgap}}}
\newcommand{\bias}{{\mathsf{Bias}}}
\newcommand{\partya}{Party~\textbf{A}}
\newcommand{\partyb}{Party~\textbf{B}}
\newcommand{\nsc}{{\mbox{\sf N-Seat-C}}}
\newcommand{\nvc}{{\mbox{\sf N-Vote-C}}}
\newcommand{\nsm}{{\mbox{\sf N-Seat-M}}}
\newcommand{\nvm}{{\mbox{\sf N-Vote-M}}}
\newcommand{\wv}{{\mbox{\sf Wasted-Votes}}}
\newcommand{\copact}{\mathscr{C}}
\title{On partisan bias in redistricting: computational complexity meets the science of gerrymandering}
\author{Tanima Chatterjee \hspace*{0.3in} Bhaskar DasGupta
\\
Department of Computer Science 
\\
University of Illinois at Chicago
\\
Chicago, IL 60607, USA
\\
{\sf tchatt2@uic.edu, bdasgup@uic.edu}
}
\date{(Preliminary version)}
\begin{document}

\maketitle

\begin{abstract}
\thispagestyle{empty}
The main topic of this paper is ``gerrymandering'', namely 
the curse of \emph{deliberate} creations of district maps with \emph{highly} asymmetric electoral outcomes
to disenfranchise voters, and it has a \emph{long} legal history going back as early as $1812$.
Measuring and eliminating gerrymandering has enormous far-reaching implications to 
sustain the backbone of democratic principles of a country or society.

Although there is no dearth of legal briefs filed in courts involving many aspects of 
gerrymandering over many years in the past, 
it is only more recently that mathematicians and \emph{applied} computational researchers have started to 
investigate this topic. 
However,  
it has received relatively \emph{little} attention so far from the \emph{computational complexity researchers} (where
by ``computational complexity researchers'' we mean researchers dealing with theoretical analysis of 
computational complexity issues of these problems, such as polynomial-time solvabilities, 
approximability issues, \emph{etc}.).
There could be \emph{several} reasons for this, such as descriptions of these problem 
non-CS non-math (often legal or political) journals that are 
\emph{not} very easy for theoretical CS (\emph{TCS}) people to follow, or 
the lack of effective collaboration between \emph{TCS} researchers and other (perhaps non-CS) 
researchers that work on these problems accentuated by the lack of coverage of these topics 
in \emph{TCS} publication venues. 
One of our modest goals in writing this article is to improve upon this situation
by stimulating \emph{further} interactions between the science of gerrymandering 
and the \emph{TCS} researchers. To this effect, our main contributions in this article are \emph{twofold}: 
\begin{enumerate}[label=$\triangleright$]
\item
We provide formalization of several models, related concepts, and corresponding problem statements using \emph{TCS} frameworks 
from the descriptions of these problems as available in existing non-CS-theory (perhaps legal) venues.
\item
We also provide computational complexity analysis of some versions of these problems, leaving 
other versions for future research.
\end{enumerate}
The goal of writing article is \emph{not} to have the final word on gerrymandering, but to introduce a series of
concepts, models and problems to the \emph{TCS} community and to show that science of gerrymandering 
involves an \emph{intriguing} set of partitioning problems involving geometric and combinatorial
optimization.
\end{abstract}

\noindent
\textbf{\small Keywords}: 
{\small Gerrymandering, geometric partitioning, computational hardness, efficient algorithms.}

\vfill

\noindent
\textbf{\small Disclaimer}: 
{\small
The authors were \emph{not} supported, financially or otherwise,
by any political party. The research results reported in this paper are purely \emph{scientific} and reported as they are
\emph{without} any regard to which political party they may be of help (if at all).
}

\clearpage

\section{Introduction}
\label{sec-intro}
\setcounter{page}{1} 

\begin{wrapfigure}{r}{3.5cm}
\vspace*{-0.2in}
\includegraphics[scale=0.6]{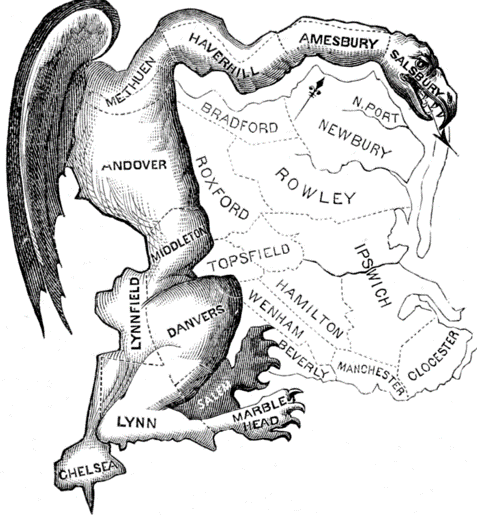}
\vspace*{-0.4in}
\caption{\label{fig1}\cite{wiki}``Gerry'' and ``salamander'' districts, $1812$ state senate election, Massachusetts.}
\end{wrapfigure}
Gerrymandering, namely \emph{deliberate} creations of district maps with \emph{highly} asymmetric electoral outcomes
to disenfranchise voters, has continued to be a curse to fairness of electoral systems in USA 
for a long time in spite of general public disdain for it.
There is a \emph{long} history of this type of voter disenfranchisement going back as early as $1812$ 
when the specific term ``gerrymandering'' was coined after a redistricting of the senate election map of the state of 
Massachusetts resulted in a South Essex district taking a shape that resembled a \emph{salamander} (see \FI{fig1}).
There is an elaborate history of \emph{litigations} involving gerrymandering as well.
In $1986$ the US Supreme Court (\emph{SCOTUS}) ruled 
that gerrymandering is \emph{justiciable}~\cite{t1}, 
but they could \emph{not} agree on an effective way of estimating it. 
In $2006$, \emph{SCOTUS} opined
that a measure of \emph{partisan symmetry} may be a helpful
tool to understand and remedy gerrymandering~\cite{t2}, but again a precise quantification of partisan 
symmetry that will be acceptable to the courts was left \emph{undecided}.
Indeed, formulating precise and computationally efficient measures for 
partisan bias (\IE, lack of partisan symmetry) 
that will be acceptable in \emph{courts} may be considered \emph{critical} to removal of 
gerrymandering\footnote{Even though measuring partisan bias \emph{is} a non-trivial issue, it has nonetheless been observed that 
two frequent indicators for partisan bias are 
\textbf{cracking}~\cite{PLB11} (dividing supporters of a specific party between two
or more districts when they could be a majority in a single district) 
and \textbf{packing}~\cite{PLB11} (filling a district with more supporters of a specific party 
as long as this does not make this specific party the winner in that district).
Other partisan bias indicators include 
\textbf{hijacking}~\cite{PLB11} (re-districting to force two incumbents to run against each other in one district)
and \textbf{kidnapping}~\cite{PLB11} (moving an incumbent's home address into another district).}\footnote{See Section~\ref{sec-scotus}
regarding the impact of the \emph{SCOTUS} gerrymandering ruling on 06/27/2019 on future 
gerrymandering studies.}.

Although there is no dearth of legal briefs filed in courts involving gerrymandering over many years in the past, 
it is only more recently that mathematicians and \emph{applied} computational researchers have started to 
investigate this topic, perhaps due to the tremendous progress in high-speed computation in the last two decades.
For example, researchers in~\cite{BK87,T73,TL67,NGCH90,C85,ND78,J94,Alt02,AG94}
have made conceptual or empirical attempts at 
quantifying gerrymandering and devising redistricting methods to optimize such 
quantifications using well-known notions such as 
\emph{compactness} and \emph{symmetry}, whereas 
researchers in~\cite{CDPAS18,Alt02,wcl16,LWCW16,TL67,CR15} have investigated 
designing efficient \emph{heuristic approach} and other computer simulation approaches for this purpose.
Two recent research directions deserve specific mentions here. 
In the first direction, researchers Stephanopoulos and McGhee
in several papers such as~\cite{m14,sm15} introduced a new gerrymandering measure called the 
\emph{efficiency gap} that attempts to minimize the 
absolute difference of total wasted
votes between the parties in a two-party electoral system, and 
very importantly, at least from a legal point of view, this measure was found legally convincing in a US appeals court in 
a case that claims that the legislative map of the state of Wisconsin is gerrymandered.
In another direction, and perhaps of considerable interest to the \emph{algorithmic game theory} researchers,
the authors in a recent paper~\cite{PPY17} formulated the redistricting process as a two-person game and 
analyzed the performances of two kinds of protocols for such games.

\subsection{Why write this article and why theoretical computer science researchers should care?}

Somewhat unfortunately, even though the science of gerrymandering have received 
varying degrees of attention from legal researchers, mathematicians and applied computational researchers,  
it has received relatively little attention so far from the theoretical computer science (\emph{TCS}) researchers (where by 
``\emph{TCS} researchers'' we mean researchers dealing with theoretical analysis of 
computational complexity issues of these problems, such as polynomial-time solvabilities, 
fixed-parameter tractabilities, approximability issues, \emph{etc}.), \emph{except} few recent results such as~\cite{CDPAS18}.
In our opinion there are several reasons for this. 
Often, some of these problems are described in ``non-CS non-math'' journals in a way that may \emph{not} be very precise 
and may \emph{not} be very easy for \emph{TCS} researchers to follow. Another possible reason is 
the \emph{lack} of effective collaboration between \emph{TCS} researchers and other (perhaps non-CS) 
researchers working on these problems, perhaps accentuated by the lack of coverage of these topics 
in \emph{TCS} publication venues. One of our goals in writing this article is to improve upon this situation. 
To this effect, the article is motivated by the following two high-level aims: 
\begin{description}[leftmargin=0.5cm]
\item[(I) Formalization of models and problem statements:]
Our formal definitions and descriptions need to satisfy two (perhaps mutually conflicting) goals. 
The levels of abstraction should be as close to their real-world applications as possible but 
should still make the problems sufficiently interesting so as to 
to attract the attention of the \emph{TCS} researchers.
\item[(II) Computational complexity analysis:]
We provide computational complexity analysis of some versions of these problems, leaving 
other versions for future research.
\end{description}
%
Task~\textbf{(I)} may \emph{not} necessarily be as straightforward as it seems, especially since 
descriptions of some of the problem variations may come from non-CS-theory (perhaps legal) venues.
Regarding Task~\textbf{(II)}, one may wonder why computational complexity analysis (including 
computational hardness results) may of be practical interest at all. To this, we point out a few reasons.
\begin{enumerate}[label=$\triangleright$,leftmargin=*]
\item
When a particular type of gerrymandering solution \emph{is} found acceptable in courts, one would eventually need to
develop and implement a software for this solution, especially for large US states such as California and Texas
where manual calculations may take too long or may not provide the best result. Any exact or approximation algorithms designed by 
\emph{TCS} researchers would be a valuable asset in that respect. Conversely, appropriate computational
hardness results can be used to convince a court to \emph{not} apply that measure for specific US states due to 
practical infeasibility.
\item
Beyond scientific implications, 
\emph{TCS} research works may also be expected to have a beneficial impact on the US judicial system.
Some justices, whether at the Supreme Court level or in lower courts, 
seem to have a reluctance to taking mathematics, statistics and computing seriously~\cite{R03,Fa89}.
\emph{TCS} research may be able to help showing that the theoretical methods, whether complicated or not 
(depending on one's background), \emph{can} in fact yield fast accurate
computational methods that can be applied to ``un-gerrymander'' the currently gerrymandered maps. 
\end{enumerate}

\subsection{Remarks on the impact of the \emph{SCOTUS} gerrymandering ruling}
\label{sec-scotus}

As this article was being written, 
\emph{SCOTUS} issued a ruling on 06/27/2019 on 
two gerrymandering cases~\cite{RuCo19}.
However, the ruling does not eliminate the need for future gerrymandering studies.
While \emph{SCOTUS} agreed that gerrymandering was anti-democratic, it decided that 
it is best settled at the legislative and political level, and it encouraged solving the problem 
at the state court level and delegating legislative redistricting to 
independent commissions via referendums. Both of the last two remedies do require further 
scientific studies on gerrymandering. It is also possible that a future \emph{SCOTUS}
may overturn this recent ruling. 

\section{Precise formulations of several gerrymandering problems}

We assume for the rest of the paper that our political system consists of two parties \emph{only},  
namely \partya\ and \partyb.
This means that we ignore negligible third-party votes as is commonly done by researchers interested in
two-party systems. Although some of our concepts can be extended for three or more 
major parties, we urge caution since gerrymandering for multi-party systems may need different definitions.

\begin{figure}[h]
\centerline{\includegraphics[scale=0.9]{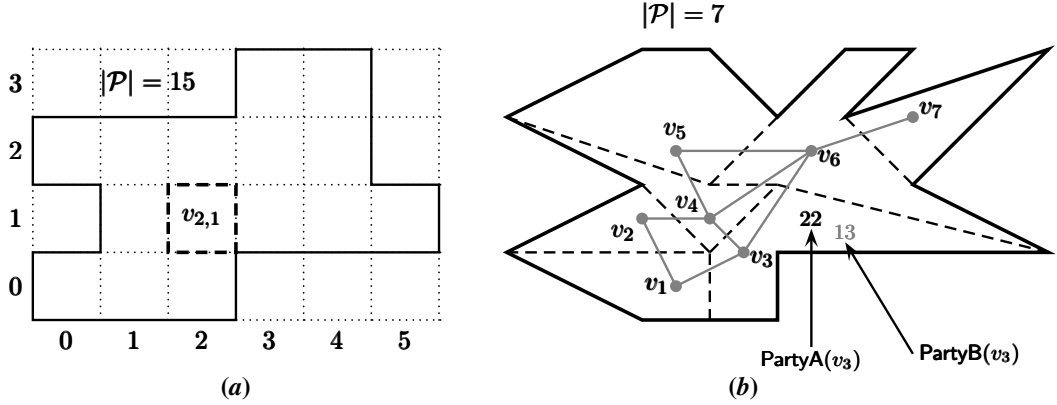}}
\caption{\label{ex1-fig}(\emph{a}) A rectilinear polygon map $\cP$ of size $15$ placed on a grid of size $6\times 4$; the cell 
$v_{2,1}$ is shown.
(\emph{b}) An arbitrary polygon map $\cP$ of size $7$. The corresponding planar graph is shown in gray.}
\end{figure}

\subsection{Input data and its granularity levels}

The topological part of an input is generically referred to a ``map'' $\cP$ which is partitioned into 
\emph{atomic elements} 
or \emph{cells} 
(\EG, subdivisions of counties or voting tabulation districts in legal gerrymandering literatures). 
The following two types of maps may be considered.
\begin{description}[leftmargin=0.5cm]
\item[Rectilinear polygon $\pmb{\cP}$ without holes (\FI{ex1-fig}$\!$(\emph{a})):]
For this case, $\cP$ is placed on a unit grid of size $m\times n$.
Then, the atomic elements (cells) of $\cP$ are identified with individual unit squares 
of the grid inside $\cP$.
We will refer to the cell
on the $i\tx$ row and $j\tx$ column 
by $v_{i,j}$ for $0\leq i<m$ and $0\leq j<n$. 
\item[Arbitrary polygon $\pmb{\cP}$ without holes (\FI{ex1-fig}$\!$(\emph{b})):]
For this case, $\cP$ is an \emph{arbitrary}
simple polygon, and 
the atomic elements (cells) of $\cP$ are arbitrary sub-polygons (without holes) inside $\cP$.
Such a map can also be thought of a planar graph $G(\cP)$ whose nodes 
are the cells, and an edge connects two cells if they share a portion of the boundary of non-zero measure.
Note that although the planar graph for a \emph{given} polygonal map is unique, 
for a given planar graph there are many polygonal maps. 
\end{description}
In either case, the \emph{size} $|\cP|$ of the map is the number of cells (\emph{resp}., nodes) in it and, 
for a cell (\emph{resp}., a node) 
$y$ and a sub-polygon $\cP'$ inside the polygonal map $\cP$ (\emph{resp}., a sub-graph $G'$ of $G(\cP)$)
the notation $y\in\cP'$ will indicate that $y$ is inside $\cP'$ (\emph{resp}., $y$ is a node of $G'$). 
Every cell or node $y$ of a map has the following numbers associated with it
(see~\FI{ex1-fig}$\!$(\emph{b})):
\begin{enumerate}[label=$\blacktriangleright$]
\item
A strictly positive integer $\eeta(y)>0$ indicating the ``total population'' inside $y$.
\item
Two non-negative integers $\aalpha(y),\bbeta(y)\geq 0$ 
such that $\aalpha(y)+\bbeta(y)=\eeta(y)$. 
$\aalpha(y)$ and $\bbeta(y)$ denotes 
the total number of voters for \partya\ and \partyb, respectively.
\end{enumerate}
In addition to the above numbers, we are also given a positive integer $1<\kappa<|\cP|$ 
that denotes the \emph{required} (legally mandated) number of districts\footnote{\textbf{This is a hard 
constraint} since a map with a different value of $\kappa$ would be \emph{illegal}. This precludes one from
designing an approximation algorithm in which the value of $\kappa$ changes even by just $\pm 1$, and conversely 
a computational hardness result for a value of $\kappa$ does \emph{not} necessarily imply a 
similar result for another value of $\kappa$.}. 
Based on existing literatures, three types of granularities of these numbers in the input data can be formalized: 
\begin{description}[leftmargin=0.5cm]
\item[Course granularity:]
For this case, the $\eeta(y)$'s are numbers of arbitrary size, and thus the total number of bits needed 
to represent the $\eeta(y)$'s
(\IE, $\sum_y \lceil \log_2 (1+\eeta(y))\rceil$)
contributes to the size of the input. 
This kind of data is obtained, for example, when one uses 
data at the ``county'' level~\cite{CDPAS18} or ``census block \emph{group}'' level~\cite{CDO00,D15}.
\item[Fine granularity:]
For this case, for every cell or node $y$ we have $0<\eeta(y)\leq c$ for some \emph{fixed constant} $c>0$.
This kind of data is obtained, for example, when one uses data  at
the ``Voting Tabulation District'' (VTD) level\footnote{VTDs are often the \emph{smallest} units in a US state for 
which the election data are available.} or at the ``census block'' level.
\item[Ultra-fine granularity:]
For this case, $\eeta(y)=c$ for some \emph{fixed constant} $c>0$ for every cell or node $y$.
If the different 
$\eeta(y)$'s in the fine granularity case do not differ from each other too much then 
depending on the optimization objective it may be possible to approximate the fine granularity
by an ultra-fine granularity.
\end{description}

\subsection{Legal requirements for valid re-districting plans}

Let $\cS$ 
denote the set of all cells (\emph{resp}., all nodes) 
in the given polygonal map $\cP$ (\emph{resp}., the planar graph $G(\cP)$).
A \emph{districting scheme} is a partition of $\cS$ 
into $\kappa$ subsets of cells (\emph{resp}., nodes), say $\cS_1,\dots,\cS_\kappa$. 
One absolutely legally required condition is the following: 

\begin{quote}
``every $\cS_j$ 
must be a connected polygon\footnote{For our purpose, two polygon sharing a single
point is assumed to be disconnected from each other.} 
(\emph{resp}., a connected subgraph)''.
\end{quote}

\noindent
For convenience, we define the following quantities for each $\cS_j$:
%
\begin{description}[leftmargin=0.5cm]
\item[Party affiliations in $\cS_j$:] 
$\aalpha(\cS_j)=\sum_{y\in\cS_j}\aalpha(y)$ 
and 
$\bbeta(\cS_j)=\sum_{y\in\cS_j}\bbeta(y)$.
\item[Population of $\cS_j$:] 
$\eeta(\cS_j)=\aalpha(\cS_j)+\bbeta(\cS_j)$.
\end{description}
Then, another legally mandated condition in its two forms can be stated as follows. 
\begin{description}[leftmargin=0.5cm]
\item[Strict partitioning criteria:]
Ideally, one would like 
$\{\cS_1,\dots,\cS_\kappa\}$ 
to be a 
(\emph{exact}) $\kappa$-\emph{equipartition} of $\cS$, 
\IE, 
\[
\forall\, j:\, \eeta(\cS_j)\in\{\lfloor\eeta(\cS)/\kappa\rfloor,\, \lceil\eeta(\cS)/\kappa\rceil\}
\]
\item[Approximately strict partitioning criteria:]
In practice, it is \emph{nearly} impossible to satisfy the strict partitioning criteria. 
To alleviate this difficulty, the exactness of equipartition is \emph{relaxed} by 
allowing 
$\eeta(\cS_1)$, $\dots$, $\eeta(\cS_\kappa)$
to differ from each other within an acceptable range.
To this effect, we define an   
$\eps$-\emph{approximate} $\kappa$-equipartition of $\cS$ 
for a given $\eps>0$ 
to be one that satisfies 
$
\dfrac{
\max_{1\leq j\leq\kappa} \left\{ \eeta(\cS_j) \right\}
}
{
\min_{1\leq j\leq\kappa} \left\{ \eeta(\cS_j) \right\}
}
\leq 1+\eps 
$.
Rulings such as~\cite{noteq} seem to suggest that the courts may allow a maximum value of 
$\eps$ in the range of $0.05$ to $0.1$. Another possibility is to have an \emph{additive} 
$\delta$-approximation to the strict partitioning criterion by allowing 
$
\max_{1\leq j\leq\kappa} \left\{ \eeta(\cS_j) \right\} \leq 
\min_{1\leq j\leq\kappa} \left\{ \eeta(\cS_j) \right\}
\,+\,\delta 
$.
\end{description}

\subsection{Optimization objectives to eliminate partisan bias}

We describe a few objective functions for optimization to remove partisan bias 
(in \emph{TCS} frameworks) that have been proposed
in existing literatures or court documents\footnote{We remind the reader that there is \emph{no} one single objective function 
that has been universally accepted in all or most court cases, and it is likely that new objectives 
will be proposed in the coming years.}.
Let 
$\cS_1,\dots,\cS_\kappa$ 
be the set of $\kappa$ districts (partitions) 
of the set of all cells (\emph{resp}., nodes) $\cS$ 
in the given polygonal (\emph{resp}., planar graph) map.
We first define a few related useful notations and concepts.  
\begin{description}[leftmargin=0.5cm]
\item[Winner of a district $\cS_j$:]
Clearly if 
$\aalpha(\cS_j)>\eeta(\cS_j)/2$
then \partya\ should be the winner and if 
$\bbeta(\cS_j)>\eeta(\cS_j)/2$
then \partyb\ should be the winner. What if 
$\aalpha(\cS_j)=\bbeta(\cS_j)=\eeta(\cS_j)/2$ ?
Most existing research works assigned the district to a \emph{specific preferred} party (\EG, \partya)
always for this case, so we will assume this \emph{by default}. However, in reality, a (fair) \emph{coin-toss}
is often used to decide the outcome\footnote{\emph{Please do not underestimate the power of a coin toss}.
The $2017$ election for the $94\tx$ district for house of 
delegates in the state of Virginia was decided by a coin toss, and in fact this also decided the 
legislative control of one of the chambers of the state.}.
\item[Normalized seat counts and seat margins of the two parties:]
\begin{gather*}
\begin{array}{cc}
\textstyle
\nsc(\mbox{\partya})
=
{
\big| \left\{ 
\cS_j \,:\,
\mbox{\partya\ wins $\cS_j$}
\right\} \big|
}
/
{\kappa},
&
\textstyle
\,\,
\nsm(\mbox{\partya})
=
\nsc(\mbox{\partya})
\,-\,
\nicefrac{1}{2}
\\
[3pt]
\textstyle
\nsc(\mbox{\partyb})
=
1 \,-\, \nsc(\mbox{\partya}),
&
\textstyle
\,\,
\nsm(\mbox{\partyb})
=
\nsc(\mbox{\partyb})
\,-\,
\nicefrac{1}{2}
\end{array}
\end{gather*}
\item[Normalized vote counts and vote margins of the two parties:]
%
\begin{gather*}
\begin{array}{cc}
\textstyle
\nvc(\mbox{\partya}) = {\aalpha(\cS)}/{\eeta(\cS)}, & 
\textstyle
\,\,\,\,
\nvm(\mbox{\partya}) = \nvc(\mbox{\partya}) - \nicefrac{1}{2}, 
\\
[3pt]
\nvc(\mbox{\partyb}) = {\bbeta(\cS)}/{\eeta(\cS)},
&
\textstyle
\,\,\,\,
\nvm(\mbox{\partyb}) = \nvc(\mbox{\partyb}) - \nicefrac{1}{2}
\end{array}
\end{gather*}
\item[Wasted votes:]
For a district $\cS_j$, the wasted votes (\IE, the votes whose absence would not have altered the election)
for the two parties are defined as follows~\cite{sm15,m14}: 
\begin{gather*}
\wv(\cS_j,\mbox{\partya}) =
\left\{
\begin{array}{r l}
\aalpha(\cS_j) - (\eeta(\cS_j)/2), 
      & \mbox{if \partya\ is the winner of $\cS_j$
			}
\\
[3pt]
\aalpha(\cS_j), & \mbox{otherwise}
\end{array}
\right.
\\
\wv(\cS_j,\mbox{\partyb}) =
\left\{
\begin{array}{r l}
\bbeta(\cS_j) - (\eeta(\cS_j)/2), 
      & \mbox{if \partyb\ is the winner of $\cS_j$}
\\
[3pt]
\bbeta(\cS_j), & \mbox{otherwise}
\end{array}
\right.
\end{gather*}
%
\end{description}
Without loss of generality, assume that 
$\aalpha(\cS)\geq\bbeta(\cS)$.
Based on the above notions, we can now describe a few optimization objectives:
\begin{description}[leftmargin=0.5cm]
\item[Seat-vote equation:]
For the \emph{decision version} of this problem, we are required to 
produce a re-districting plan that exactly satisfies 
a relationship between between normalized seat counts and normalized vote counts 
between the two parties.
The relationship was stated by~\cite{T73} as
\begin{gather}
\nsc(\mbox{\partya})
\,/\,
\nsc(\mbox{\partyb})
\approx
\big(
\aalpha(\cS)
\,/\,
\bbeta(\cS)
\big)^\rho
\label{eq1}
\end{gather}
where $\rho$ is a positive number and $\approx$ denotes almost equality.
Kendall and Stuart in~\cite{KS50}
argued in favor of $\rho=3$ using some stochastic models. 
Some special cases of Equation~\eqref{eq1} are as follows: 
\begin{quote}
\textbf{Proportional representation:} 
$\rho=1$, 
\hspace*{0.5in}
\textbf{Winner-take-all:}
$\rho=\infty$.
\end{quote}
In practice, a value of $\rho\in [1,3]$ is considered to be a reasonable choice. 
For an \emph{optimization version} of this problem,
\textbf{assuming $\pmb{\nsc(\mbox{\partyb})>0}$ and assuming \partya\ has the responsibility to 
do the re-districting}\footnote{In other words, \partya\ chooses the districts 
in an attempt to his/her desirable value for $\nsc\left(\mbox{\partya}\right)$.}, 
we define an (asymptotic) $\eps$-approximation ($\eps\geq 1$) as a 
solution that satisfies 
\begin{gather}
\textstyle
\eps^{-1}
\lim\limits_{\eeta(\cS)\to\infty}
\left(
\frac{
\aalpha(\cS)
}
{
\bbeta(\cS)
}
\right)^\rho
\leq
\textstyle
\lim\limits_{\kappa\to\infty}
\left(
\frac{
\nsc\left(\mbox{\partya}\right)
}
{
\nsc\left(\mbox{\partyb}\right)
}
\right)
\leq 
\eps
\lim\limits_{\eeta(\cS)\to\infty}
\left(
\frac{
\aalpha(\cS)
}
{
\bbeta(\cS)
}
\right)^\rho
\label{eq2}
\end{gather}
Equation~\eqref{eq2} is obviously ill-defined when 
$\nsc(\mbox{\partyb})=0$, which may indeed happen in practice for smaller values of $\kappa$ such as $\kappa=2$.
We introduce appropriate modifications to Equation~\eqref{eq2} to avoid this in the following manner.
If $\nsc(\mbox{\partyb})=0$  then 
$\nsc(\mbox{\partya})/\kappa=1$ and thus an exact version of the seat-vote equation would intuitively want 
$\aalpha(\cS)/\eeta(\cS)=1$ no matter what $\rho$ is.
Thus, when $\nsc(\mbox{\partyb})=0$,
we consider such a solution
as an $\eps$-approximation where 
\begin{gather}
\textstyle
\eps=
\lim_{\eeta(\cS)\to\infty}
\left(\nicefrac{\aalpha(\cS)}{\eeta(\cS)}\right)^{-1}
\tag*{\eqref{eq2}$'$}
\end{gather}
\item[Efficiency gap:]
The goal here is to \emph{minimize} the \emph{absolute difference} of \emph{total} wasted
votes between the parties, \IE, we need to find a partition 
that minimizes
\[
\textstyle
\effgap_{\kappa}(\cS,\cS_1,\dots,\cS_\kappa)
=
\left| \,
\sum_{j=1}^\kappa 
\big( \, \wv(\cS_j,\mbox{\partya}) \,-\, \wv(\cS_j,\mbox{\partyb}) \, \big)
\,\right|
\]
\item[Partisan bias:]
Partisan bias is a \emph{deviation} from bipartisan symmetry that favors one party over the other. 
The underlying assumption in using this very popular measure 
is that both the parties should expect to receive
the same number of seats given the same vote proportion, \IE, for example, 
if 
$\nvc(\mbox{\partya}) = 0.7$
and the redistricting plan results in 
$\nsc(\mbox{\partya}) = 0.4$
then assuming 
$\nvc(\mbox{\partya}) = 1-0.7=0.3$
the same redistricting plan should result in 
$\nsc(\mbox{\partya}) = 1-0.4=0.6$.
However, since the precise distribution of voters when 
$\nvc(\mbox{\partya}) = 0.3$ is not known, the distribution is generated artificially 
possibly based on some assumptions (which may not always be acceptable to court). 
Mathematically, a measure of partisan bias can be computed in the following manner.
\begin{enumerate}
\item
Let $\alpha=\nvc(\mbox{\partya})-\nvc(\mbox{\partyb})$. Note that $\alpha\in [0,1]$.
\item
Select $\beta_1,\dots,\beta_\kappa\in [0,1]$ such that 
$\beta_1+\dots+\beta_\kappa=\alpha$. 
These choices depend upon the \emph{population shift model} being used.
\item
For every district $\cS_j$, we create a district $\widetilde{\cS_j}$
that corresponds to the same region (sub-polygon or sub-graph) but with the following parameters changes:
\begin{gather*}
\aalpha(\widetilde{\cS_j}) = \aalpha(\cS_j) \,-\, \beta_j\eeta(\cS), 
\,\,\,\,
\bbeta(\widetilde{\cS_j}) = \bbeta(\cS_j) \,+\, \beta_j\eeta(\cS)
\end{gather*}
Note that 
$\widetilde{\cS_1},\dots,\widetilde{\cS_\kappa}$ is another legally valid re-districting plan for $\cS$
but for this new plan the normalized vote count for \partya\ is given by
\begin{gather*}
\textstyle
\frac{ \sum_{j=1}^\kappa \aalpha(\widetilde{\cS_j}) } {\eeta(\cS)}
=
\frac{ \sum_{j=1}^\kappa \big( \, \aalpha(\cS_j) \,-\, \beta_j\eeta(\cS) \,\big) } {\eeta(\cS)}
=
\frac{ \aalpha(\cS) \,-\, \alpha\eeta(\cS) \,\big) } {\eeta(\cS)}
=
\nvc(\mbox{\partyb})
\end{gather*}
\item
Recalculate the normalized seat count 
$\widetilde{\nsc}(\mbox{\partya})$
for \partya\ for this new partition
$\widetilde{\cS_1},\dots,\widetilde{\cS_\kappa}$.
\item
Define the measure of bias as 
$
\bias_{\kappa}(\cS,\cS_1,\dots,\cS_\kappa)
=
\big|
\widetilde{\nsc}(\mbox{\partya})
\,-\,
\nsc(\mbox{\partya})
\big|
$.
\end{enumerate}
The goal is then to find a partition 
$\cS_1,\dots,\cS_\kappa$ to \emph{minimize} 
$\bias_{\kappa}(\cS,\cS_1,\dots,\cS_\kappa)$.
\item[Geometric compactness of a polygonal district $\cS_j$:]
The primary goal of using this measure is to ensure that polygonal districts do not have 
``unusually weird'' shapes (\emph{cf}.\ \FI{fig1}). 
A most commonly used compactness measure is the so-called
``Polsby-Popper compactness measure''~\cite{PP91} given by 
$
\copact(\cS_j)=c\,A/B^2
$
where $A$ is the area and $B$ is the length of the perimeter of $\cS_j$, 
and $c>0$ is a suitable constant ($c=4\pi$ was used in~\cite{O78}).
The computational problem is then to find a re-districting plan 
such that 
$L_1\leq\copact(\cS_j)\leq L_2$
for all $j$
for two given bounds $L_1$ and $L_2$. 
\end{description}
In addition to what is discussed above, there are other constraints and optimization criteria, such as 
\emph{responsiveness} (also called \emph{swing ratio}), 
\emph{equal vote weight} and \emph{declination}, 
that we did not discuss; the reader is referred to references such as~\cite{W18,MB15,amici04} 
for informal discussions on them.

\subsection{Prior relevant computational complexity research}

To our knowledge, 
the most relevant prior non-trivial computational complexity (\IE, approximation hardness, approximation algorithms, \emph{etc}.) 
article regarding gerrymandering is~\cite{CDPAS18}.
The article~\cite{CDPAS18} exclusively dealt with the efficiency gap measure, and provided some non-trivial 
approximation hardness and approximation algorithms in addition to designing and implementing a practical
algorithm for this case which works well on real maps.
In the terminologies of this article, \cite{CDPAS18} showed that 
minimization of the efficiency gap measure for rectilinear polygonal maps with coarse grain inputs 
and strict partitioning criteria
does \emph{not} admit \emph{any} non-trivial polynomial-time approximation in the worst case, but \emph{does} admit
polynomial-time approximation algorithms when \emph{further} constraints are added to the problem.
In addition, \cite{CDPAS18} and \cite[p. 853]{sm15}
also observed that 
$\mathsf{Effgap}_{\kappa}(\cS,\cS_1,\dots,\cS_\kappa)/\eeta(\cS)
= | \, 2 \,\times\, \nvm(\mbox{\partya}) \,-\, \nsm(\mbox{\partya}) \,|$.

\section{Our computational complexity results}

\textbf{Before stating our technical results, we remind the reader about the following obvious but important 
observations}. Consider the following combinations for a pair $(X,Y)$:
\begin{enumerate}[label=$\triangleright$]
\item
$X$ is rectilinear polygonal input and $Y$ is arbitrary polygonal input (equivalently, a planar graph), or 
\item
$X$ is fine or ultra-fine granular input and $Y$ is coarse input, or 
\end{enumerate}
Then, the following statements hold: 
\begin{enumerate}[label=$\blacktriangleright$]
\item
\emph{Any} computational hardness result for $X$ \emph{also} implies the same result for $Y$.
\item
\emph{Any} approximation or exact algorithmic result for $Y$ \emph{also} implies the same result for $X$.
\end{enumerate}
In the statements of our theorems or lemmas, we will use the following convention.
$\kappa>1$ will denote the number of districts.
For polygonal maps (\emph{resp}., planar graph maps)
$\cS$ ((\emph{resp}., $G=(V,E)$) will denote the polygon as a collection of all cells
(\emph{resp}., the graph), 
and $\cS_1,\dots,\cS_\kappa\subset \cS$ 
((\emph{resp}., $V_1,\dots,V_\kappa\subset V$) 
will denote an arbitrary valid (not necessarily optimal) solution.
{\bf
Since every state of USA has a valid current districting partition (sometimes subject to litigation), 
we assume that our problem has already at least one valid (but not necessarily optimal)
solution that can be found in polynomial time (thus, for example, for our computational hardness 
results we are required to exhibit a polynomial-time valid solution).}

{\em In the following two sub-sections, we state our two computational complexity results and some relevant
discussions on them, leaving the actual proofs later in Sections~\ref{sec1}--\ref{sec2}}.

\subsection{Rectilinear polygonal course granularity input}

\begin{theorem}[\bf Hardness of seat-vote equation computation]\label{thm1}
Let $\rho>0,\eps\geq 1$ be two arbitrary finite rational numbers, and 
$c>1, \delta>0$ be any two constants arbitrarily close to $1$ and $0$, respectively.
Suppose that we are allowed a 
(reasonably loose) 
additive 
$|\cS|^{c}$-approximate strict partitioning criteria (\emph{\IE}, 
the partitioning satisfies
$
\textstyle
\max_{1\leq j\leq\kappa} \left\{ \eeta(\cS_j) \right\}
\leq
\min_{1\leq j\leq\kappa} \left\{ \eeta(\cS_j) \right\}
+
|\cS|^{c}
$).
%

\medskip
\noindent
\emph{\textbf{(\emph{a}) (Hardness when $\nvc(\mbox{\partya})<\nicefrac{1}{2}$)}}. 
It is $\NP$-hard to compute an $\eps$-approximation of the seat-vote-equation optimization problem.

\medskip
\noindent
\emph{\textbf{(\emph{b}) (Hardness when $\nvc(\mbox{\partya})\geq\nicefrac{1}{2}$)}}.
Let $\kappa=3\,\alpha+r$ for some two integers $\alpha\geq 1$ and $r\in\{-1,0,1\}$.
Then, it is $\NP$-hard to distinguish between the following two cases:
\begin{enumerate}[label=$\triangleright$]
\item
if the seat-vote-equation has an $(\eps_{\mathrm{low}}-\delta)$-approximation where
$
\eps_{\mathrm{low}} \leq 
\left\{
\begin{array}{r l}
2, & \mbox{if $\kappa\in\{2,3\}$} 
\\
\frac{\kappa}{\alpha+1}-1, & \mbox{otherwise}
\end{array}
\right.
$
\item
or, if the seat-vote-equation has an $(\eps_{\mathrm{high}}+\delta)$-approximation where
$\eps_{\mathrm{high}}\geq \kappa-1$.
\end{enumerate}
Moreover, a valid solution that is a $(\kappa-1)$-approximation always exists irrespective of what definition of 
of an approximately strict partitioning criterion is used.
\end{theorem}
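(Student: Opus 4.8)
\textbf{Proof plan for Theorem~\ref{thm1}.}

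The plan is to prove the two hardness statements via reduction from a suitable $\NP$-hard partitioning problem (I would use a variant of {\sc Set Cover}/{\sc X3c} or a load-balancing problem, since the additive $|\cS|^{c}$-slack in the equipartition constraint is generous enough that exact balance is not the obstacle and one can embed a combinatorial gadget inside a rectilinear polygon with coarse populations). The key idea for part~(\emph{a}) is that when $\nvc(\mbox{\partya})<\nicefrac{1}{2}$, the target quantity $\big(\aalpha(\cS)/\bbeta(\cS)\big)^{\rho}$ is a fixed number strictly less than $1$, so the seat-vote equation is asking for a redistricting in which \partya\ wins a \emph{prescribed} (small) fraction of the $\kappa$ seats; controlling whether \partya\ can be made to win exactly the right number of districts (as opposed to fewer or more) is the combinatorial core, and I would design the gadget so that the ``yes'' instances admit a plan achieving the exact ratio while ``no'' instances force the ratio off by more than the multiplicative factor $\eps$. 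For part~(\emph{b}), when $\nvc(\mbox{\partya})\geq\nicefrac{1}{2}$ the target ratio is $\geq 1$ but the coarseness of $\kappa$ (writing $\kappa=3\alpha+r$) limits how finely the seat count can be tuned: the reduction should force the number of \partya-won districts into one of two widely separated regimes, giving on the one hand a plan with ratio close to $\tfrac{\kappa}{\alpha+1}-1$ (or $2$ in the tiny cases $\kappa\in\{2,3\}$) and on the other hand instances where the best achievable ratio is at least $\kappa-1$, which happens precisely when \partya\ can be confined to a single district. The $\delta$ slack absorbs rounding in the gadget's population counts.

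The steps, in order, are: (i) state the source problem and recall it is $\NP$-hard; (ii) build the rectilinear polygonal map---a grid with a coarse-grained block for each element/set of the source instance, plus ``filler'' cells whose populations are tuned so that any valid $\kappa$-districting respecting the additive $|\cS|^{c}$-slack must group the blocks in the intended way; (iii) set $\aalpha(\cdot),\bbeta(\cdot)$ on the cells so that the induced global ratio $\aalpha(\cS)/\bbeta(\cS)$ is exactly the prescribed value and so that a district's winner is determined by which source-gadget it contains; (iv) prove the forward direction (a ``yes'' instance yields a districting whose seat ratio meets the exact seat-vote relation, hence is trivially an $\eps$-approximation for any $\eps\geq 1$); (v) prove the backward direction (any districting of a ``no'' instance has seat ratio outside the $\eps$-window in case~(\emph{a}), or in the stated gap in case~(\emph{b})), handling the $\nsc(\mbox{\partyb})=0$ corner case via Equation~\eqref{eq2}$'$; and (vi) verify a valid initial solution is polynomial-time computable, as required by the conventions in the paper.

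For the final sentence---that a $(\kappa-1)$-approximation always exists regardless of the approximate-partitioning definition---the argument is short and essentially definitional. Since we assumed $\aalpha(\cS)\geq\bbeta(\cS)$, in \emph{any} valid districting \partya\ wins at least one district (a standard averaging/pigeonhole observation: if \partya\ won no district then \partya's total vote would be strictly less than half, contradicting $\aalpha(\cS)\geq\bbeta(\cS)$, modulo the tie-breaking-to-\partya\ convention which only helps). Hence $\nsc(\mbox{\partya})\geq 1/\kappa$ and, when $\nsc(\mbox{\partyb})>0$, the ratio $\nsc(\mbox{\partya})/\nsc(\mbox{\partyb})$ lies between $\tfrac{1}{\kappa-1}$ and $\kappa-1$, while the target $\big(\aalpha(\cS)/\bbeta(\cS)\big)^{\rho}$ also lies in $[\,(\aalpha/\bbeta)^{-\rho},(\aalpha/\bbeta)^{\rho}\,]$ with $(\aalpha/\bbeta)\geq 1$; choosing the plan that just makes \partya\ win a single district (always possible by merging) pins the ratio to an endpoint and a direct check bounds the approximation factor by $\kappa-1$. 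When $\nsc(\mbox{\partyb})=0$ one falls back on Equation~\eqref{eq2}$'$, where $\eps=\big(\aalpha(\cS)/\eeta(\cS)\big)^{-1}\leq 2\leq \kappa-1$ for $\kappa\geq 3$ and is handled separately for $\kappa=2$.

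I expect the \textbf{main obstacle} to be step~(ii)/(iii): simultaneously (a) realizing the source instance's combinatorial structure as a \emph{connected} rectilinear polygon whose only valid $\kappa$-partitions (under the loose additive slack) correspond to solutions of the source problem, and (b) setting the vote counts so that the global ratio $\aalpha(\cS)/\bbeta(\cS)$ hits the exact prescribed value while each district's winner is forced as intended---the tension is that the additive $|\cS|^{c}$-slack is deliberately generous, so the populations must be scaled up polynomially (coarse granularity makes this legal) to keep the slack from allowing unintended regroupings, and the $\rho$-th power in the seat-vote relation means the arithmetic for the target ratio must be arranged to remain rational and exactly attainable.
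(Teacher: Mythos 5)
Your high-level outline does match the shape of the paper's argument (an $\NP$-hard partition-type reduction embedded in a coarse rectilinear map, populations scaled polynomially so the additive $|\cS|^{c}$ slack cannot license unintended regroupings, a YES/NO seat-count gap realizing the stated bounds, and a pigeonhole argument giving \partya\ at least one seat when $\nvc(\mbox{\partya})\geq\nicefrac{1}{2}$), but the proposal stops exactly where the proof begins: the gadget you defer in steps (ii)/(iii), and explicitly flag as the main obstacle, \emph{is} the entire technical content. The paper reduces from PARTITION, not from an X3C/Set-Cover variant, and this choice is not cosmetic: PARTITION's balance condition is what interacts with the relaxed equipartition constraint. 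Concretely, the map is a $3\times(n+1)$ rectangle whose middle row carries the numbers $a_i$ (with $\aalpha=a_i/2-1$), whose last column has two heavy cells of population $W/2$ with vote offsets $+50n$ and $-100n$, and whose remaining cells have population $2$; after multiplying the $a_i$ by $n^{2+2c}$, separating the heavy cells or splitting the middle row unevenly forces a population imbalance exceeding $|\cS|^{c}$, so the only valid partitions encode PARTITION solutions, and the offsets make \partya\ win one district per copy iff an even split exists and zero districts otherwise (general $\kappa=3\alpha+r$ is handled by chaining $\alpha$ copies through heavy connector cells; part (b) changes one heavy cell's votes so that \partya\ wins one seat by default and a second in the first copy iff PARTITION is solvable, giving the $\alpha+1$ versus $1$ gap behind $\eps_{\mathrm{low}}$ and $\eps_{\mathrm{high}}$). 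You give no mechanism that would force this ``only intended partitions survive the slack'' property for your proposed source problems, so the reduction is missing, not merely unfinished.

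Two further specific issues. For part (a) you plan to make the target ratio a fixed number strictly below $1$ and to push NO instances off ``by more than $\eps$''; since $\eps$ is an arbitrary finite rational, you need either an $\eps$-dependent construction or (as the paper does) a degenerate NO case: the paper keeps $\lim\left(\aalpha(\cS)/\bbeta(\cS)\right)=1$ while $\nvc(\mbox{\partya})<\nicefrac{1}{2}$, and in the NO case \partya\ wins \emph{zero} districts, so the seat ratio fails every finite $\eps$ at once; your sketch does not secure such a gap. For the final ``Moreover'' claim, your pigeonhole step agrees with the paper, but ``always possible by merging'' is unjustified ($\kappa$ is fixed, and you cannot in general arrange that \partya\ wins exactly one district), and the ``direct check'' bounding the factor by $\kappa-1$ fails when $\left(\aalpha(\cS)/\bbeta(\cS)\right)^{\rho}$ is large; the paper's statement is anchored to instances where the asymptotic vote ratio is $1$, for which \partya\ winning at least one of the $\kappa$ seats immediately gives the $(\kappa-1)$ bound.
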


\begin{remark}
The hardness result in 
\emph{\textbf{(\emph{b})}} is tight if $\kappa=2$ since a we have 
a $2$-approximation. For $\kappa>2$ there is a factor gap between the two bounds 
that may be worthy of further investigation.
Note that 
$\lim_{\kappa\to\infty}\eps_{\mathrm{low}}=2$.
\end{remark}


Chatterjee \EA~\cite{CDPAS18} showed that the efficiency gap computation does not admit \emph{any} 
non-trivial approximation at all using the \emph{strict} partitioning criterion if the input is given at
rectilinear polygonal course granularity level.
The following theorem shows that the same result holds even if the strict partitioning criteria is
relaxed \emph{arbitrarily}.

\begin{theorem}[\bf Hardness of efficiency gap computation]\label{thm2}
Let $\delta\geq0,\eps\geq 1$ be any two numbers.
Then, it is $\NP$-hard to compute an $\eps$-approximation of 
$\mathsf{Effgap}_{\kappa}(\cS,\cS_1,\dots,\cS_\kappa)$ 
even when 
we are allowed to use 
$\delta$-approximate $\kappa$-equipartition of $\cS$. 
\end{theorem}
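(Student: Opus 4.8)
The plan is to reduce from an $\NP$-hard problem (for concreteness a variant of \textsc{Partition}, in the spirit of \cite{CDPAS18}) and to arrange that the produced instance has optimum efficiency gap equal to $0$ on ``yes'' instances and bounded away from $0$ on ``no'' instances, where the minimization ranges over all valid $\delta$-approximate $\kappa$-equipartitions into connected districts. This gap structure is exactly what makes \emph{every} finite $\eps\geq 1$ hard at once: an $\eps$-approximation algorithm must output a feasible partition whose efficiency gap is at most $\eps$ times the optimum, so on a ``yes'' instance it is forced to return a partition with $\effgap_\kappa\leq\eps\cdot 0=0$, while on a ``no'' instance every feasible partition --- hence every possible output --- has $\effgap_\kappa>0$; so such an algorithm would decide the source problem.

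The algebraic engine is the exact, per-partition identity
\[
\effgap_{\kappa}(\cS,\cS_1,\dots,\cS_\kappa)=\Bigl| \, 2\,\aalpha(\cS)-\tfrac12\,\eeta(\cS)-\eeta(\cS^{A}) \, \Bigr|,
\]
where $\cS^{A}$ is the union of the districts won by \partya\ and $\eeta(\cS^{A})$ its total population; this follows by expanding the wasted-vote definitions, and it specializes to the quoted identity $\effgap_\kappa/\eeta(\cS)=|\,2\,\nvm(\mbox{\partya})-\nsm(\mbox{\partya})\,|$ when the districts are equipopulous. The point is that $\aalpha(\cS)$ and $\eeta(\cS)$ are partition-independent, so $\effgap_\kappa$ is a function of $\eeta(\cS^{A})$ alone. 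In the construction I would fix the vote totals so that $2\,\aalpha(\cS)-\tfrac12\eeta(\cS)=t^{\ast}\,\eeta(\cS)/\kappa$ for a designated integer $t^{\ast}\in\{0,1,\dots,\kappa\}$ (for instance $\nvc(\mbox{\partya})=\tfrac12$ and $t^{\ast}=\kappa/2$ with $\kappa$ even); then $\effgap_\kappa=0$ holds exactly when \partya\ wins $t^{\ast}$ districts whose populations sum to the ``fair share'' $t^{\ast}\,\eeta(\cS)/\kappa$, and a strict equipartition realizing ``\partya\ wins exactly $t^{\ast}$ districts'' achieves this automatically.

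Robustness to the (arbitrary) additive slack $\delta$ is obtained by scaling every population by a large factor $M=M(\delta,\kappa)$ fixed \emph{after} $\delta$ and $\kappa$ are given --- legitimate at coarse granularity. Indeed, in any $\delta$-approximate $\kappa$-equipartition every district population lies in $[\,\eeta(\cS)/\kappa-\delta,\ \eeta(\cS)/\kappa+\delta\,]$, so if \partya\ wins $t$ districts then $|\,\eeta(\cS^{A})-t\,\eeta(\cS)/\kappa\,|\leq\kappa\delta$, whence $\effgap_\kappa\geq \eeta(\cS)/\kappa - \kappa\delta>0$ whenever $t\neq t^{\ast}$ and $M$ is taken large enough. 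Thus on a ``no'' instance the only remaining threat to $\effgap_\kappa=0$ is a valid partition in which \partya\ wins exactly $t^{\ast}$ districts with A-won total population within $\kappa\delta$ of $t^{\ast}\,\eeta(\cS)/\kappa$, and the whole burden is to design the rectilinear-polygon gadget (on a grid) so that, over \emph{all} partitions into $\kappa$ connected cell-sets --- including wildly shaped or mildly unbalanced ones --- such a partition exists if and only if the source instance is a ``yes'' instance. Certifying this faithful encoding --- in particular ruling out that some oddly shaped connected district, or the $\delta$-slack itself, lets a ``no'' instance slip through --- is the main obstacle, and is where I would build on, and re-verify the robustness of, the construction of \cite{CDPAS18}. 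Finally, a polynomial-time valid solution, required by the paper's convention, is supplied by designing the map to admit an obvious balanced connected slicing into $\kappa$ pieces.
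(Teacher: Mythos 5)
Your proposal is correct and follows essentially the same route as the paper: a reduction from \textsc{Partition} built on the construction behind Theorem~4 of~\cite{CDPAS18}, with the optimum forced to be $0$ on yes-instances and strictly positive on no-instances (which kills every finite $\eps$-approximation at once), and with a population-scaling step that makes the $\delta$-relaxation of the equipartition constraint toothless. The paper's sketch does exactly this, the only cosmetic difference being that it scales all cell populations so that any valid $\delta$-approximate partition is forced to be an exact equipartition (after which completeness $\effgap_\kappa=0$ and soundness $\effgap_\kappa=\delta\Delta>0$ are inherited from~\cite{CDPAS18}), whereas you scale so as to pin down the number of districts won by \partya\ and then invoke exactness of the zero value.
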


\subsection{Arbitrary polygonal fine granularity input}

For this case, it is clearer to present our proofs if we assume that the planar graph format of
our input, \IE, 
our input is planar graph whose nodes 
are the cells, and whose edges connect pairs of cells if they share a portion of the boundary of non-zero measure.

Chatterjee \EA~\cite{CDPAS18} 
left open the complexity of the efficiency gap computation at the fine granularity level of inputs using 
either exact or approximate partitioning criteria.
Here we show that computing the efficiency gap is $\NP$-complete for arbitrary polygonal fine granularity input even 
under \emph{approximately} strict partitioning criteria.

\begin{theorem}[\bf Hardness of efficiency gap computation]\label{thm-effgap-hard}
Computing $\mathsf{Effgap}_{\kappa}(\cS,\cS_1,\dots,\cS_\kappa)$ is $\NP$-complete 
even when 
we are allowed to use 
$\eps$-approximate $\kappa$-equipartition of $\cS$ 
for any constant $0<\eps<\nicefrac{1}{2}$.
\end{theorem}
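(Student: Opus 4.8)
Membership in $\NP$ is immediate: a re‑districting plan $\cS_1,\dots,\cS_\kappa$ is a polynomial‑size certificate, and checking that it is a valid $\eps$‑approximate $\kappa$‑equipartition into connected pieces and evaluating $\effgap_\kappa(\cS,\cS_1,\dots,\cS_\kappa)$ both take polynomial time; so the work is the $\NP$‑hardness. The lever I would use is the identity obtained by unfolding the wasted‑vote definitions, $\effgap_\kappa(\cS,\cS_1,\dots,\cS_\kappa)=\bigl|\,2\aalpha(\cS)-\tfrac12\eeta(\cS)-P\,\bigr|$, where $P$ is the total population $\sum\eeta(\cS_j)$ over the districts won by \partya\ (this refines the already‑recorded identity $\effgap_\kappa/\eeta(\cS)=|2\nvm(\mbox{\partya})-\nsm(\mbox{\partya})|$ and, unlike it, stays exact when the $\eeta(\cS_j)$ are unequal). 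Since $\aalpha(\cS)$ and $\eeta(\cS)$ are fixed by the map, a re‑districter controls \emph{only} $P$, and the minimum of $\effgap_\kappa$ over valid plans is $0$ precisely when some valid plan makes $P$ equal the instance‑determined target $T:=2\aalpha(\cS)-\tfrac12\eeta(\cS)$. So the goal is a planar, constant‑bounded‑population instance in which ``some connected $\eps$‑approximate $\kappa$‑equipartition has $P=T$'' is $\NP$‑complete.

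There are two natural routes. One is to refine the coarse‑granularity construction behind Theorem~\ref{thm2} (assuming, as one can arrange, that it uses polynomially bounded populations, since it pivots on whether $\effgap_\kappa=0$): replace each heavy node by a connected ``blob'' of many constant‑population nodes carrying the same \partya/\partyb\ split, thicken the planar embedding accordingly, and re‑tune the global \partya\ and \partyb\ totals with light ``ballast'' nodes so $T$ still lands on the value of $P$ produced by the intended plans. The other is a fresh reduction from a planar, small‑number $\NP$‑complete problem (planar \tSAT, or a planar exact‑cover/partition problem), building the map from \emph{choice gadgets} (one per variable/item, whose district‑sized connected carvings encode the two truth values / the group joined), \emph{clause gadgets} (completable into valid districts only under consistent incident choices), and ballast nodes for tuning; planarity of the source instance is what lets all gadgets sit inside one simple polygon. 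Either way nothing requires a weight exceeding a fixed constant, so the instance is fine‑granular.

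The step I expect to be the real obstacle is robustness to the slack in an $\eps$‑approximate equipartition: with only constant populations, the $(1+\eps)$ multiplicative slack becomes an additive slack of order $\eps\,|\cS|/\kappa$ per district---linear in the input---so one cannot pin district boundaries with heavy weights and must instead prove, purely structurally, a gadget‑rigidity lemma: in every connected $\eps$‑approximate $\kappa$‑equipartition each gadget (or blob) is split in one of only a few canonical ways, so $P$ ranges over a controlled set not containing $T$ unless the source instance is a yes‑instance. The threshold $\eps<\tfrac12$ is exactly what powers this lemma---once $1+\eps<\tfrac32$, relocating a whole gadget (or half of one) between districts forces a population ratio above $1+\eps$, killing the gross manipulations, while the gadgets are designed without small detachable sub‑polygons to kill the fine ones. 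Granting this lemma, routine bookkeeping on the tuned totals shows $P=T$ is achievable iff the source instance is a yes‑instance (and otherwise $\effgap_\kappa$ is bounded away from $0$ by a constant), which together with membership in $\NP$ yields $\NP$‑completeness for every fixed $\eps\in(0,\tfrac12)$.
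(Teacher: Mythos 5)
Your identity $\effgap_{\kappa}(\cS,\cS_1,\dots,\cS_\kappa)=\bigl|\,2\aalpha(\cS)-\tfrac12\eeta(\cS)-P\,\bigr|$ (with $P$ the total population of districts won by \partya) is correct, and the $\NP$-membership argument is fine, but what you have written is a plan rather than a proof: everything hinges on the gadget construction and the ``gadget-rigidity lemma'' that you explicitly leave open, so the reduction is never actually carried out. Worse, your first route is unsound. The coarse-granularity construction behind Theorem~\ref{thm2} reduces from PARTITION, which is only \emph{weakly} $\NP$-hard; if you arrange its populations to be polynomially bounded (which your blob-splitting into constant-population nodes requires), the underlying PARTITION instance becomes solvable in polynomial time and the reduction no longer establishes hardness. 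Fine-granularity hardness must come from a strongly $\NP$-hard source, which is exactly what the paper does: it reduces from maximum independent set on planar cubic graphs (\mispc), replaces each node of $G$ by an $8$-node gadget and each edge by an edge-node, with all populations in a constant range, sets $\kappa=9n/2$, and shows that the attainable efficiency gap equals $|\Upsilon(\mu)|$ for an explicit affine function $\Upsilon$ of the independent-set size $\mu$, with $|\Upsilon(\mu)|$ strictly decreasing in $\mu$. The hardness thus comes from the effgap \emph{value} encoding the independence number, not from an effgap of $0$ being attainable; the rigidity you worry about is enforced locally by the gadget populations together with the bound $\eps<\nicefrac12$ and the exact district count $\kappa$, which is where your sketch remains entirely schematic.

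Your target of ``$\effgap=0$ versus bounded away from $0$ by a constant'' is also strictly stronger than what the theorem needs, and the paper itself gives evidence it may be unattainable at fine granularity: Remark~\ref{rem2} notes the hard instances admit a PTAS via~\cite{B94}, and the concluding remarks conjecture a constant-factor approximation for fine-granularity inputs --- a conjecture that would be refuted outright if your gap instances existed, since any multiplicative approximation must return an optimal plan on a zero-optimum instance. So the missing rigidity lemma is not a routine detail but the crux, and the particular gap you aim for is both unnecessary for the statement and plausibly impossible; the paper sidesteps both issues by settling for exact-value hardness through the monotone dependence of $|\Upsilon(\mu)|$ on $\mu$.
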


\begin{remark}\label{rem2}
The $\NP$-hardness reduction in Theorem~\ref{thm-effgap-hard} does not provide any non-trivial 
inapproximability ratio. In fact, for the specific hard instances of the gerrymandering problem 
constructed in the proof of Theorem~\ref{thm-effgap-hard}, it is possible to design a 
polynomial-time approximation scheme (PTAS) for the efficiency gap computation using 
the approach in~{\em\cite{B94}} $(${\bf the proof of such a PTAS is relatively straightforward and therefore 
we do not provide an explicit proof}$)$. 
\end{remark}

\subsection{What do results and proofs 
in Theorem~\ref{thm1} and Theorem~\ref{thm-effgap-hard} imply in the context of gerrymandering in US?}

Our results are computational hardness result, so one obvious question is about the implications 
of these results and associated proofs for gerrymandering in US. 
To this effect, 
we offer the following motivations and insights that might be of independent interest.
\begin{description}[leftmargin=0.1in]
\item[On following the seat-vote equation:]
Theorem~\ref{thm1} indicates that efficient computation of even a modest approximation to the seat-vote equation 
may be difficult. Thus, unless further research works indicate otherwise, 
it may \emph{not} be a good idea to closely follow the seat-vote equation for computationally 
efficient elimination of gerrymandering (fortunately, many courts also do not recommend on following the seat-vote proportion 
too closely, though not for computational complexity reasons).
\item[On relaxing the exact equipartition criteria:]
Relaxing the exact equipartition criteria even beyond the 
\\
$\thicksim\!\!10$\% margin that has traditionally been 
allowed by courts does \emph{not} seem to make removal of gerrymandering computationally any easier.
\item[On accurate census data at the fine granularity level:]
Accurate census data at the fine granularity level may make a difference to an independent commission 
seeking fair districts (such as in California). As stated in Remark~\ref{rem2},  
while it is difficult to even approximately optimize the absolute difference of the wasted votes 
at a course granularity level of inputs, the situation at the fine granularity level of inputs
may be not so hopeless.
\item[On cracking and packing, how far one can push?]
It is well-known that cracking and packing may result in large partisan bias. For example, 
based on $2012$ election data for election of the (federal) house of representatives for
the states of Virginia, the Democratic party 
had a normalized vote count of about $52$\% but due to cracking/packing held \emph{only} $4$ of the $11$ house seats~\cite{VA1,VA2}.
This observation, coupled with the knowledge that 
Virginia is one of the \emph{most} gerrymandered states in US
both on the congressional and state levels~\cite{web-compact}, 
leads to the following natural question: {\bf ``could the Virginia lawmakers have disadvantaged the Democratic party more
by even more careful execution of cracking and packing approaches''}?  
As one lawmaker put it quite bluntly, they would have liked to gerrymander more \emph{if only} they could.

\hspace*{0.2in}
We believe a partial answer to this is provided by the proof structures for Theorems~\ref{thm2}~and~\ref{thm-effgap-hard}.
A careful inspection of the proofs of 
Theorems~\ref{thm2}~and~~\ref{thm-effgap-hard}
reveal that they \emph{do} use cracking and packing\footnote{For example, packing is used in the proof of  
Theorem~\ref{thm-effgap-hard} when a node $v_i^3$ with $4\delta$ extra supporters for \partya\ is packed 
in the \emph{same} district with the three nodes $v_{i,p}$, $v_{i,q}$ and $v_{i,r}$ each having $\delta$ extra supporters for
\partyb\ (see \FI{ex3-fig}).}
to create hard instances of the efficiency gap minimization problem 
that are computationally intractable to solve optimally certainly 
at the course granularity input level and \emph{even} at the fine granularity input level\footnote{The proofs
of Theorems~\ref{thm2}~and~\ref{thm-effgap-hard} however do \emph{not} make much use of \emph{hijacking} or \emph{kidnapping}.}.
Perhaps the computational complexity issues \emph{did} save the Democratic party from further electoral disadvantages.
\end{description}

\section{Proof of Theorem~\ref{thm1}}
\label{sec1}

\begin{figure}[htbp]
\centerline{\includegraphics[scale=0.85]{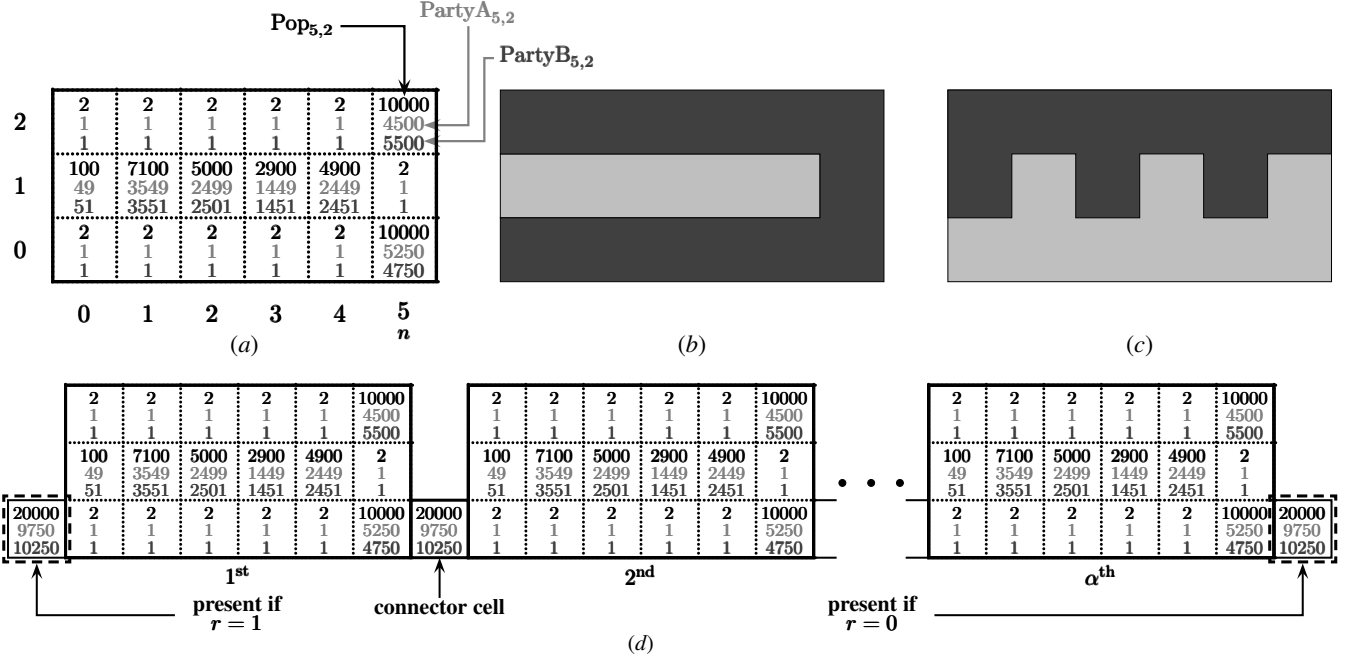}}
%
\caption{\label{ex2-fig}(\emph{a}) An illustration of the construction in the proof of Theorem~\ref{thm1}
for $\kappa=2$ when the instance of the PARTITION problem is 
$\cA=\{100,7100,5000,2900,4900\}$ (and thus $W=20000$).
(\emph{b}) An optimal solution of the redistricting problem when a solution of the PARTITION problem exists.
(\emph{c}) A trivial valid solution which is not optimal.
(\emph{d}) Generalization of the reduction for arbitrary $\kappa\geq 2$.
}
\end{figure}

\noindent
\textbf{(\emph{a})}
We reduce from the $\NP$-complete 
PARTITION 
problem~\cite{GJ79} which is defined as follows: 

\begin{quote}
{\em 
given a set of $n$ positive integers 
$\cA=\left\{ a_0,\dots,a_{n-1}\right\}$, decide if there exists a subset $\cA'\subset\cA$ 
such that 
$
\sum\limits_{a_i\in\cA'} a_i
=
\sum\limits_{a_j\notin\cA'} a_j 
=\frac{W}{2}
$ 
where $W=\sum\limits_{j=0}^{n-1} a_j$ is an even number.
}
\end{quote}

\noindent
Note that we can assume without loss of generality that $n$ is sufficiently large, $n$ and each of 
$a_0,\dots,a_{n-1}$ is a multiple of \emph{any} fixed positive integer (in particular, multiple of $2$), 
$\max_j\{a_j\}<W/2$, \emph{no} two integers in $\cA$ are equal and $W>n^{2c}$.

\medskip
\noindent
\textbf{Proof for $\kappa=2$}.

\smallskip
Multiplying $a_0,\dots,a_{n-1}$ and $W$ by $n^{2+2c}$, and denoting them by the same notations 
we can therefore assume 
that the minimum absolute difference between any two distinct numbers in $\cA$ is at least $n^{2+2c}$
and $W>n^{2+4c}$.
Our rectilinear polygon is a rectangle 
$\cS=\left\{ p_{i,j} \,|\, 0\leq i\leq n, \, 0\leq j\leq 2 \right\}$
of size $3\times (n+1)$ (see~\FI{ex2-fig}(\emph{a})) with the following numbers for various cells:
\begin{gather*}
\eeta_{i,j}=
\left\{
\begin{array}{r l}
a_i, & \mbox{if $0\leq i<n$ and $j=1$}
\\
\nicefrac{W}{2}, & \mbox{if $i=n,j=0$,} 
\\
                 & \,\,\,\, \mbox{or if $i=n,j=2$}
\\
2, & \mbox{otherwise}
\end{array}
\right.
\,\,\,\,\,
\,\,\,\,\,
\aalpha_{i,j}=
\left\{
\begin{array}{r l}
({a_i}/{2})-1, & \mbox{if $0\leq i<n$ and $j=1$}
\\
(\nicefrac{W}{4}) + 50\,n, & \mbox{if $i=n,j=0$} 
\\
(\nicefrac{W}{4}) - 100\,n, & \mbox{if $i=n,j=2$} 
\\
1, & \mbox{otherwise}
\end{array}
\right.
\end{gather*}
Note that:
\begin{enumerate}[label=$\triangleright$]
\item
$\eeta(\cS)=2\times(W/2)+\sum_{j=0}^{n-1}a_j+2\times (2n+1)-2=2W+4n$.
\item
$\aalpha(\cS)=2\times(\nicefrac{W}{4})+50\,n -100\,n
+\sum_{j=0}^{n-1}((a_j/2)-1)+(2n+1)=
W-47n-1$.
\item
$\nvc(\mbox{\partya}) = (W-47n-1)/(2W+4n)<\nicefrac{1}{2}$.
\end{enumerate}
First, as required, we show that $\cS$ has a valid solution satisfying all the constraints. Consider the 
following solution
(refer to \FI{ex2-fig}(\emph{b})):
\[
\cS_1=
\left\{ p_{i,1} \,|\, a_i\in\cA \right\},
\,\,
\,\,
\,\,
\cS_2 = \cC\setminus\cS_1
\]
We can now verify the following:
\[
\eeta(\cS_1)=
\sum_{a_i\in\cA}a_i
= W,\,\,\, 
\eeta(\cS_2)=\eeta(\cS)-\eeta(\cS_1)=W+4n,
\]
and thus the partitioning constraint is satisfied since
$ 4n < (3n+3)^{c}$.
Since $\lim\limits_{\eeta(\cS)\to\infty} \left( \frac{\aalpha(\cS)}{\bbeta(\cS)}\right)=1$, 
the proof is complete once the following claims are shown.

\medskip
\begin{adjustwidth}{0.6cm}{}
\begin{description}
\item[(completeness)]
If the PARTITION problem has a solution 
then 
$\nsc(\mbox{\partya})=1$.
\item[\hspace*{0.2in}(soundness)]
If the PARTITION problem does not have a solution 
then 
$\nsc(\mbox{\partya})=0$.
\end{description}
\end{adjustwidth}

\medskip
\noindent
\textbf{Proof of completeness} 
(refer to \FI{ex2-fig}(\emph{c}))

\smallskip
Suppose that there is a valid solution of $\cA'\subset\cA$ of PARTITION and consider  
the two polygons 
\[
\cS_1=
\left\{ p_{i,0} \,|\, 0\leq i \leq n \right\}
\cup
\left\{ p_{i,1} \,|\, a_i\in\cA' \right\}
\cup
\left\{p_{n,1}\right\},
\,\,
\,\,
\,\,
\cS_2 = \cC\setminus\cS_1
\]
One can now verify the following:
\begin{enumerate}[label=$\triangleright$,leftmargin=*]
\item
$\eeta(\cS_1)=
2 (n+1) + \left(\sum_{a_i\in\cA'}a_i\right) + \frac{W}{2}
=
W + 2n +2$, 
$\eeta(\cS_2)=\eeta(\cS)-\eeta(\cS_1) = W+2n-2$, and thus the partitioning constraint is satisfied since
$\eeta(\cS_1) - \eeta(\cS_2) =
4 < (3n+3)^c$.
\item
$\aalpha(\cS_1)=(n+1)+ \sum_{a_i\in\cA'}\left(\frac{a_i}{2}-1\right) + \frac{W}{4} + 50n
= 
\frac{W}{2} + (51n+ 1) -|\cA'|$, 
$\bbeta(\cS_1)=\eeta(\cS_1)-\aalpha(\cS_1)
=
\frac{W}{2} -49n + 1 +|\cA'|$,
and thus 
$\aalpha(\cS_1)>\bbeta(\cS_1)$
since $|\cA'|<n-1$.
\item
$\aalpha(\cS_2)=\aalpha(\cS)-\aalpha(\cS_1) =\frac{W}{2}- 98n - 2 +|\cA'|$, 
$\bbeta(\cS_2)=\eeta(\cS_2)-\aalpha(\cS_2)=\frac{W}{2}+100n-|\cA'|$,
and thus 
$\aalpha(\cS_2)<\bbeta(\cS_2)$
since $|\cA'|<n-1$.
\end{enumerate}

\medskip
\noindent
\textbf{Proof of soundness} 

\smallskip
Let $\cS_1$ and $\cS_2=\cS \setminus \cS_1$ be the two partitions in any valid solution of the redistricting problem. 
For convenience, let us define the following sets:
\begin{gather*}
\cS_{\cS_1} = \left\{ \,p_{i,1} \,|\,0\leq i < n  \right\} \,\cap\, \cS_1,
\,\,\,\,
\cS_{\cS_2} = \left\{ \,p_{i,1} \,|\,0\leq i < n  \right\} \,\cap\, \cS_2
\\
\cS_{heavy} = \left\{ p_{n,0},p_{n,2} \right\},
\,\,\,\,
\cS_{light} = \cS \setminus \left( \cS_{heavy} \,\cup\, \cS_{\cS_1} \,\cup\, \cS_{\cS_2} \right)
\end{gather*}
The following chain of arguments prove the desired claim.
\begin{description}[leftmargin=0.5cm]
\item[(\emph{i})]
\emph{Both} the cells in 
$\cS_{heavy}$
\emph{cannot} be together in the same partition, say $\cS_1$, with any cell, say $p_{i,1}$, from 
$\cS_{\cS_1} \cup \cS_{\cS_2}$ 
since in that case 
\begin{multline*}
\eeta(\cS_1) \geq W + a_i \,\,\,\&\,\,\, 
\eeta(\cS_2) =  \eeta(\cS)-\eeta(\cS_1) \leq W+4n-a_i 
\\
\Rightarrow\,
\eeta(\cS_1)-\eeta(\cS_2) \geq  2a_i -4n > 2\,n^{2+2c} -4n > n^{2+2c} > |\cS|^c = (3n+3)^c 
\end{multline*}
\item[(\emph{ii})]
At least one of 
$\cS_{\cS_1}$ and $\cS_{\cS_2}$ must be \emph{empty}.  
To see this, assume that both are non-empty. 
By~(\emph{i}), 
we may suppose that $p_{n,0}\in\cS_1$ and $p_{n,2}\in\cS_1$. 
Since the PARTITION problem does not have a solution,  
$
L=\sum\limits_{p_{i,1}\in\cS_{\cS_1}} a_i 
\neq
M=\sum\limits_{p_{i,1}\in\cS_{\cS_2}} a_i 
$.
Assume, without loss of generality, that $L>M$. Then, 
$
L-M \geq \min\limits_{0\leq i<n} \{a_i\}\geq n^{2+2c} 
$, and therefore  
$
|\,\eeta(\cS_1)-\eeta(\cS_2) \,| \geq 
|\, (L-M) - \eeta(\cS_{light}) \,|
>
n^{1+2c}>|\cS|^c
$, 
thus violating the partitioning constraints.
\item[(\emph{iii})]
Since both $\cS_{\cS_1}$ and $\cS_{\cS_2}$ cannot be empty, 
by (\emph{ii}) assume that $\cS_{\cS_1}=\emptyset$ but $\cS_{\cS_2}\neq\emptyset$. Then, 
by (\emph{i}), 
both $p_{n,0}$ and $p_{n,2}$ are in $\cS_1$.
We can now verify that $\nsc(\mbox{\partya})=0$ as follows: 
\begin{itemize}
\item
$
\aalpha(\cS_1) \leq \frac{W}{4}+50n +\frac{W}{4}-100n + 2n+1 = \frac{W}{2} -48n+1
$, 
$
\bbeta(\cS_2) \geq \frac{W}{4}-50n +\frac{W}{4}+100n = \frac{W}{2}+50n
$, 
and thus 
$\aalpha(\cS_1)<\bbeta(\cS_1)$.
\item
$
\aalpha(\cS_2) \leq \sum_{j=0}^{n-1} \left(\frac{a_j}{2}-1\right) +n  = \frac{W}{4}
$, 
$
\bbeta(\cS_2) \geq \sum_{j=0}^{n-1} \left(\frac{a_j}{2}+1\right) \frac{W}{4} +n 
$, 
and thus 
$\aalpha(\cS_1)<\bbeta(\cS_1)$.
\end{itemize}
\end{description}

\medskip
\noindent
\textbf{Proof for $\kappa \geq 2$}.

\smallskip
Let $\kappa=3\,\alpha+r$ for some two integers $\alpha\geq 1$ and $r\in\{-1,0,1\}$.
For this case, we will use $\alpha$
copies, say 
$
\cS^{(1)}, \cS^{(2)},\dots, \cS^{(\alpha)}
$, 
of the $3\times (n+1)$ rectangle $\cS$ used for the previous case 
connected via $\alpha-1$ connector cells, 
say 
$
\cC^{(1)}, \cC^{(2)},\dots, \cC^{(\alpha-1)}
$, 
plus additional one or two cells, say  
$\cC^{(\alpha)}$ and $\cC^{(\alpha+1)}$, 
depending on whether the value of $r$ is $0$ or $1$, respectively 
(refer to \FI{ex2-fig}~(\emph{d})).
We now multiply $a_0,\dots,a_{n-1}$ and $W$ by $n^{3+2c}\kappa^2$, and again denoting them by the same notations 
we can therefore assume 
that the minimum absolute difference between any two distinct numbers in $\cA$ is at least $n^{3+2c}\kappa^2$
and $W>n^{3+4c}\kappa^2$. 
We assign the required numbers to the connector and additional cells as follows: 
$\eeta(\cC^{(j)})=W$ and 
$\aalpha(\cC^{(j)})=\frac{W}{2}-50n$ 
for all $j$.
Letting $\beta=\alpha + r$ denote the actual number of connector cells, we 
now have the following updated calculations: 
\begin{gather*}
\eeta(\cS)=\alpha (2W+4n) + \beta W,
\,\,\,
\aalpha(\cS)=\alpha (W-47n-1) + \beta \left(\frac{W}{2}-50n\right)
\\
|\cS|=3\,\alpha\,(n+1) + (\alpha+r) =
3\,\alpha\,n+ 4\,\alpha + r 
\leq
(\kappa+1)n + \frac{4\kappa}{3} + \frac{7}{3}
<
2\,\kappa\,n
\\
\nvc(\mbox{\partya}) = \frac{\aalpha(\cS)}{\eeta(\cS)} < \nicefrac{1}{2},
\,\,\,
\text{as required}
\end{gather*}

\begin{claim}\label{claim1}
Any of the connector or additional cells cannot appear in the same partition with a cell from 
$\cS^{(j)}_{heavy} = \left\{ p_{jn+(j-1)+r,0},p_{jn+(j-1)+r,2} \right\}$ for any $j$.
\end{claim}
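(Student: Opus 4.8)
The plan is to adapt the size argument used in step~(\emph{i}) of the soundness proof for the $\kappa=2$ case to the multi-copy construction. Recall that each heavy cell in $\cS^{(j)}_{heavy}$ has population $\nicefrac{W}{2}$, each connector or additional cell $\cC^{(\ell)}$ has population $W$, and the target population per district is $\eeta(\cS)/\kappa$, which (after the rescaling by $n^{3+2c}\kappa^2$) is within an additive $|\cS|^c < (2\kappa n)^c$ of each $\eeta(\cS_j)$. First I would compute $\eeta(\cS)/\kappa$ explicitly: since $\eeta(\cS) = \alpha(2W+4n) + \beta W$ and $\kappa = 3\alpha + r = 2\alpha + \beta$ (using $\beta = \alpha+r$), this works out so that the per-district population is essentially $W$ up to a lower-order term of size $O(\kappa n)$, which is the ``intended'' population of a single copy $\cS^{(j)}$ (population $2W+4n$ split among its natural $3$ districts, i.e.\ roughly $W$ each after the heavy cells are placed one-per-district, and $W$ for each connector/additional cell).

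Next I would suppose for contradiction that some connector or additional cell $\cC^{(\ell)}$ lies in the same district $\cS_t$ as a cell of $\cS^{(j)}_{heavy}$, say $p_{jn+(j-1)+r,\,0}$. Then $\eeta(\cS_t) \geq W + \nicefrac{W}{2} = \nicefrac{3W}{2}$. On the other hand, since all $\kappa$ districts partition $\cS$, the smallest district has population at most $\eeta(\cS)/\kappa$, which is at most roughly $W + O(\kappa n)$. Therefore
\[
\max_{1\leq i\leq\kappa}\eeta(\cS_i) \;-\; \min_{1\leq i\leq\kappa}\eeta(\cS_i)
\;\geq\; \tfrac{3W}{2} - \bigl(W + O(\kappa n)\bigr)
\;=\; \tfrac{W}{2} - O(\kappa n),
\]
and since we rescaled so that $W > n^{3+4c}\kappa^2$, this difference exceeds $W/4 > (2\kappa n)^c > |\cS|^c$ for $n$ sufficiently large, violating the additive $|\cS|^c$-approximate equipartition constraint. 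This contradiction establishes the claim. The same bound handles the case where the heavy cell is $p_{jn+(j-1)+r,\,2}$, and also the case where two or more connector/additional cells, or a connector cell together with a heavy cell, pile up in one district — any such combination only makes $\eeta(\cS_t)$ larger.

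The only slightly delicate point — and the step I expect to require the most care — is pinning down the exact value of $\min_i \eeta(\cS_i)$, i.e.\ confirming that $\eeta(\cS)/\kappa$ really is $W + O(\kappa n)$ rather than something a constant factor larger or smaller, so that the gap $\nicefrac{3W}{2} - \eeta(\cS)/\kappa$ is genuinely $\Theta(W)$. This amounts to the arithmetic identity $\eeta(\cS)/\kappa = \bigl(\alpha(2W+4n)+\beta W\bigr)/(2\alpha+\beta)$; since the coefficient of $W$ in the numerator is $2\alpha+\beta = \kappa$, the $W$-terms cancel to leave exactly $W$, and the remaining $4\alpha n/(2\alpha+\beta) < 2n$ is the claimed $O(\kappa n)$ slack (in fact $O(n)$). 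Once this is in hand, the rescaling chosen in the construction was precisely engineered to dominate both this slack and the $|\cS|^c$ partitioning tolerance, so the remaining inequalities are routine.
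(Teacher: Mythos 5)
Your proposal is correct and follows essentially the same route as the paper's proof: you place a heavy cell and a connector cell in one district to force its population to at least $\nicefrac{3W}{2}$, compare against the average district population $\eeta(\cS)/\kappa = W + \nicefrac{4\alpha n}{\kappa}$ (using $2\alpha+\beta=\kappa$) to bound the minimum, and conclude that the max--min gap of order $W$ exceeds the additive tolerance $|\cS|^c$ because $W > n^{3+4c}\kappa^2$ while $|\cS| < 2\kappa n$. Your arithmetic is in fact slightly cleaner than the paper's, but the argument is the same.
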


\begin{proof}
Suppose that the connector cell 
$\cC^{(i)}$
is together with at least one of two cells from  
$\cS^{(j)}_{heavy}$
in a partition, say $\cS_p$. 
Then, 
$\eeta(\cS_p) = \frac{W}{2} + W = \frac{3W}{2}
$. 
Note that  
\begin{multline*}
\frac{\eeta(\cS)}{\kappa}
=
\frac{
\alpha (2W+4n) + \beta W
}
{\kappa}
=
\frac{
(2\alpha+\beta)W + 4\alpha n
}
{\kappa}
=
\frac{
(3\alpha+r)W + 4\alpha n
}
{\kappa}
\\
=
\frac{
\kappa W + 4\alpha n
}
{\kappa}
W + 
\frac{4n}{3} \times 
\frac{
\kappa-r
}
{\kappa}
<
W + 
\frac{5n}{3}
\end{multline*}
and thus there exists a partition $\cS_q$, $q\neq p$, such that 
$\eeta(\cS_q) <
W + \frac{5n}{3}$.
Consequently, it follows that 
\[
\eeta(\cS_p)-\eeta(\cS_q)
>
\frac{3W}{2} - 
W + \frac{5n}{3}
>\frac{4W}{3}>\frac{4}{3}n^{3+4c}\kappa^2>|\cS|^c
\]
which violates the partitioning constraint.
\end{proof}

It \emph{is} possible to generalize the proof for $\kappa=2$ to $\kappa>2$. 
Intuitively, if there is a solution to the PARTITION problem then one of the two seats 
in each copy $\cS^{(j)}$ is won by \partya\ but otherwise \partya\ wins no seat at all. 
The correspondingly modified completeness and soundness claims are as follows:

\medskip
\begin{adjustwidth}{0.2cm}{}
\begin{description}
\item[(completeness for $\kappa>2$)]
If the PARTITION problem has a solution 
then 
$\nsc(\mbox{\partya})=\alpha$.
\item[(soundness for $\kappa>2$)]
If the PARTITION problem does not have a solution 
then 
$\nsc(\mbox{\partya})=0$.
\end{description}
\end{adjustwidth}

\medskip
\noindent
\textbf{(\emph{b})}
We can use a proof similar to that in 
\textbf{(\emph{a})}
for $\kappa\geq 2$, but we need to \emph{change} some of the numbers. 
More precisely, 
the cell 
$p_{n+r,0}\in \cS^{(1)}_{heavy}$
in the very first copy 
$\cS^{(1)}$ has the following new number (instead of the previous value of $(W/4)-100\,n$) 
corresponding to the total number of voters for \partya: 
$
\aalpha(p_{n+r,0}) = (W/4) + q\,\alpha^2 n^2
$ where $q\geq 0$ is the smallest integer such that 
$q\alpha^2 n^2 + 100n -49\alpha n-\alpha -50 n \beta \geq 0$.
Note that 
$\bbeta(p_{n+r,0}) = (W/2)-\aalpha(p_{n+r,0})>0$
since 
$W>n^{3+4c}\kappa^2$. 
A relevant calculation is: 
\begin{multline*}
\aalpha(\cS) - \frac{\eeta(\cS)}{2}
=
\left[ \alpha (W-47n-1) + 100n + q\alpha^2 n^2 + \beta \left( \frac{W}{2}-50n \right) \right]
-
\left[\alpha (W+2n) + \beta \frac{W}{2} \right]
\\
=
q\alpha^2 n^2 + 100n -49\alpha n-\alpha -50 n \beta \geq 0
\end{multline*}
and therefore 
$\nvc(\mbox{\partya}) = \frac{\aalpha(\cS)}{\eeta(\cS)} \geq \nicefrac{1}{2}$,
as required.
The only difference in the proofs come from the fact that now in the first copy 
$\cS^{(1)}$ \partya\ always wine one seat by default but wins two seats if PARTITION has a solution. 
The correspondingly modified completeness and soundness claims are as follows:

\medskip
\begin{adjustwidth}{0.6cm}{}
\begin{description}
\item[(modified completeness claim for $\nvc(\mbox{\partya}) > \nicefrac{1}{2}$)]
If the PARTITION problem has a solution 
then 
$\nsc(\mbox{\partya})=\alpha+1$.
\item[(modified soundness claim for $\nvc(\mbox{\partya}) > \nicefrac{1}{2}$)]
If the PARTITION problem does not have a solution 
then 
$\nsc(\mbox{\partya})=1$.
\end{description}
\end{adjustwidth}

\noindent
To see that these completeness and soundness claims indeed prove the desired bounds, note the following: 
\begin{enumerate}[label=$\triangleright$]
\item
$a=\lim\limits_{\eeta(\cS)\to\infty} \left( \frac{\aalpha(\cS)}{\bbeta(\cS)}\right)=1$
and 
$b=\lim\limits_{\eeta(\cS)\to\infty} \left( \frac{\aalpha(\cS)}{\eeta(\cS)}\right)=\nicefrac{1}{2}$. 
\item
If $\kappa=2$ and 
$\nsc(\mbox{\partya})=\alpha+1=2$, 
then $\nsc(\mbox{\partyb})=0$, 
and thus this gives a $2$-approximation since $1/b=2$.
\item
If $\kappa=3$ and 
$\nsc(\mbox{\partya})=\alpha+1=2$, 
then $\nsc(\mbox{\partyb})=1$, 
and thus this gives a $2$-approximation since $\frac{\nsc(\mbox{\partya})}{\nsc(\mbox{\partyb})}=2$.
\item
For any $\kappa\geq 2$, if 
$\nsc(\mbox{\partya})=1$ then 
$\nsc(\mbox{\partyb})=\kappa-1$ 
and thus this gives a $(\kappa-1)$-approximation.
\end{enumerate}
For the existence of a $\kappa$-approximation when 
$\nvc(\mbox{\partya}) \geq \nicefrac{1}{2}$,
note that for any valid solution $\cS_1,\dots,\cS_\kappa$ for $\cS$, 
$\nvc(\mbox{\partya}) = \dfrac{ \aalpha(\cS) } { \eeta(\cS)  }
= \dfrac { \sum_{i=1}^\kappa \aalpha(\cS_i) } { \sum_{i=1}^\kappa \eeta(\cS_i) } 
\geq \nicefrac{1}{2}$, and 
thus there must exists a district $\cS_j$ such that $\aalpha(\cS_j) \geq \nicefrac{\eeta(\cS_j)}{2}$.


\section{Proof sketch of Theorem~\ref{thm2}}

The proof is obtained by carefully modifying the proof of Theorem~4 in~\cite{CDPAS18}
in the following manner:
\begin{enumerate}[label=$\triangleright$,leftmargin=*]
\item
We remove all cells with zero population. As a result, the rectangle in~\cite{CDPAS18}
now becomes a rectilinear polygon (without holes).
\item
We multiply all the non-zero values of $\eeta(\cdot)$'s and $\aalpha(\cdot)$'s by $1+2\delta$.
It is possible to verify that as a result the following claim holds:
\begin{quote}
for any two districts $\cS_i$ and $\cS_j$, 
$\eeta(\cS_i)\neq \eeta(\cS_j)$ implies either 
$\eeta(\cS_i)>(1+\eps)\eeta(\cS_j)$ 
or
$\eeta(\cS_j)>(1+\eps)\eeta(\cS_i)$.  
\end{quote}
This ensures that 
$\eeta(\cS_1)=\dots=\eeta(\cS_\kappa)$
for any valid partition of the rectilinear polygon.
\item
The new soundness and completeness claims now become as follows:
\medskip
\begin{adjustwidth}{0.6cm}{}
\begin{description}
\item[(soundness)]
If the PARTITION problem does not have a solution 
then 
$\mathsf{Effgap}_{\kappa}(\cS,\cS_1,\dots,\cS_\kappa)=\delta\Delta$.
\item[(completeness)]
If the PARTITION problem has a solution 
then 
$\mathsf{Effgap}_{\kappa}(\cS,\cS_1,\dots,\cS_\kappa)=0$.
\end{description}
\end{adjustwidth}
\noindent
where $\Delta$ is exactly as defined in~\cite{CDPAS18}
\end{enumerate}


\section{Proof of Theorem~\ref{thm-effgap-hard}}
\label{sec2}

The problem is trivially in $\NP$, so will concentrate on the $\NP$-hardness reduction.
Our reduction is from the \emph{maximum independent set problem for planar cubic graphs}
(\mispc) 
which is defined as follows: 

\begin{quote}
{\em 
``given a cubic (\emph{\IE}, $3$-regular) planar graph $G=(V,E)$ 
and an integer $\nu$, does there exist an independent set for $G$ with $\nu$ nodes ?''
} 
\end{quote}

\noindent
\mispc\ is known to be $\NP$-complete~\cite{GJS76} but there exists a PTAS for it~\cite{B94}.
Note the value of 
\\
$\mathsf{Effgap}_{\kappa}(\cS,\cS_1,\dots,\cS_\kappa)$ remains the \emph{same} if we divide (or multiply) 
the values of all $\aalpha(\cS_j)$'s and $\bbeta(\cS_j)$'s by $t$ for any integer $t>0$.
Thus, to simplify notation, we assume that we have re-scaled the numbers such that 
$\min_{1\leq j\leq\kappa} \left\{ \eeta(\cS_j) \right\}=1$ and therefore our approximately strict 
partitioning criteria is satisfied by ensuring that 
$1\leq \eeta(\cS_j)\leq 1+\eps$ for all $j=1,\dots,\kappa$ with 
$\eeta(\cS_j)=1$ for at least one $j$.
Thus, each $\aalpha(\cS_j)$, $\bbeta(\cS_j)$ and $\eeta(\cS_j)$
may be positive \emph{rational} constant numbers such that, 
if needed, we can ensure that all these numbers are integers at the end of the reduction by multiplying 
them by a suitable positive integer of polynomial size.

\begin{figure}[htbp]
\centerline{\includegraphics[scale=0.8]{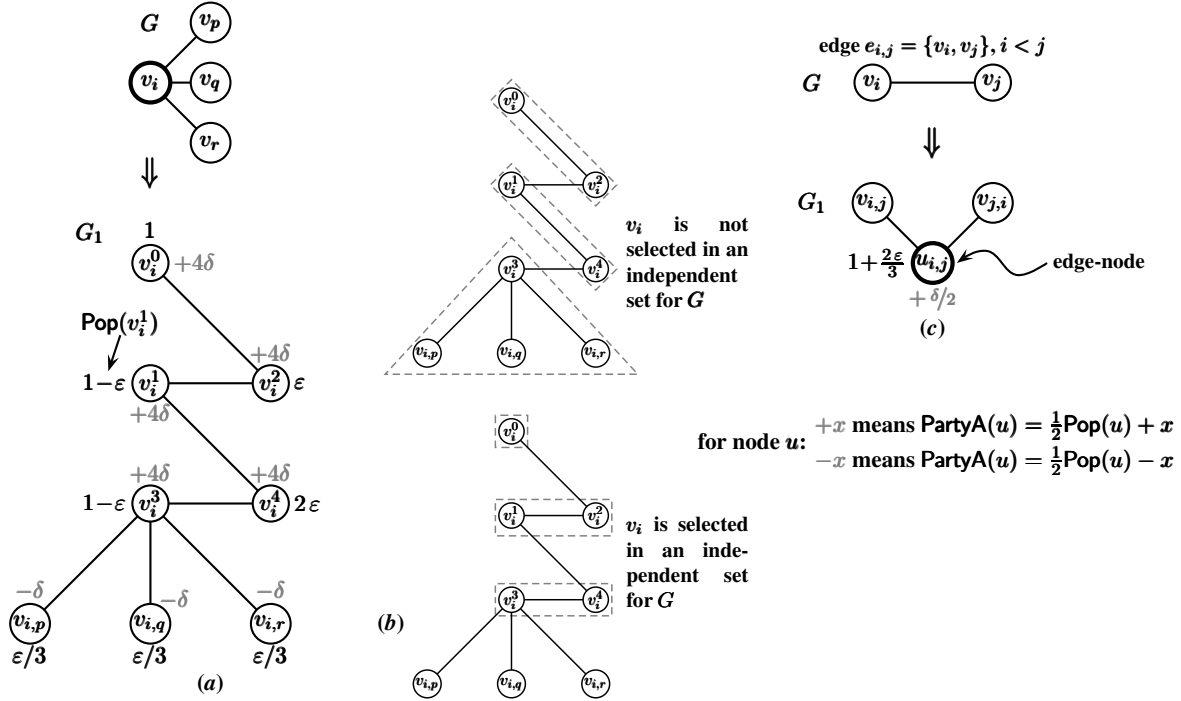}}
%
\caption{\label{ex3-fig}The sub-graph gadgets used in the proof of Theorem~\ref{thm-effgap-hard}.}
\end{figure}

Let $G=(V,E)$ and $\nu$ be the given instance of 
\mispc\
with $V=\{v_1,\dots,v_n\}$ and $|E|=3n/2$. 
Note that, since $G$ is cubic, we can always greedily find an independent set of at least 
$\nicefrac{n}{4}$ nodes and moreover there does not exist any independent set of more than 
$\nicefrac{n}{2}$ nodes; thus we can assume 
$\nicefrac{n}{4}<\nu\leq\nicefrac{n}{2}$.  
Let $\delta=n^{-3}/100>0$ be a rational number of polynomial size that is sufficiently small compared to $\eps$.
We describe an instance of our map $G_1=(V_1,E_1)$ (a planar graph with all required numbers) constructed from $G$ as follows. 
\begin{description}
\item[Node gadgets:]
Every node $v_i\in V$ with its three adjacent nodes as $v_p,v_q,v_r$ is replaced a sub-graph of $8$ new nodes 
$v_i^0,v_i^1,v_i^2,v_i^3,v_i^4,v_{i,p},v_{i,q},v_{i,r}\in V_1$ and $7$ new edges 
along with their $\eeta(\cdot)$ and $\aalpha(\cdot)$ values 
as shown in \FI{ex3-fig}(\emph{a}).
The requirement ``$1\leq \eeta(\cS_j)\leq 1+\eps$ for all $j$'' and the fact 
that $0<\eps<\nicefrac{1}{2}$ ensure that these nodes can be covered only in 
the two possible ways as shown in \FI{ex3-fig}(\emph{b}):
\begin{enumerate}[label=$\triangleright$,leftmargin=*]
\item
For the top case in \FI{ex3-fig}(\emph{b}),
all the $8$ nodes are covered by $3$ districts.
Intuitively, this corresponds to the case when $v_i$ is \emph{not} selected in an independent set for $G$.
We informally refer to this as the \emph{the ``$v_i$ is not selected'' case}.
\item
For the bottom case in \FI{ex3-fig}(\emph{b}),
$5$ of the $8$ nodes are covered by $3$ districts, leaving the remaining $3$ nodes 
(nodes $v_{i,p},v_{i,q},v_{i,r}$) to be covered with some other nodes in $G_1$. 
Intuitively, this corresponds to the case when $v_i$ \emph{is} selected in an independent set for $G$.
We informally refer to this as the \emph{the ``$v_i$ is selected'' case}.
\end{enumerate}
Note that this step in all introduces $8n$ new nodes and $7n$ new edges in $G_1$. 
\item[Edge gadgets:]
For every edge $e_{i,j}=\{v_i,v_j\}\in E$ (with $i<j$), 
we introduce one new node (the ``edge-node'') 
$u_{i,j}$ and two new edges 
$\{v_{i,j},u_{i,j}\}$ and 
$\{v_{j,i},u_{i,j}\}$
as shown in \FI{ex3-fig}(\emph{c}).
Note that this step in all introduces $3n/2$ new nodes and $3n$ new edges in $G_1$. 
\end{description}
Thus, we have $|V_1|=19n/2$ and $|E_1|=10n$, and surely $G_1$ is planar since $G$ was 
a planar graph. Finally, we set $\kappa=9n/2$. \emph{Note that 
the instance $G_1$ is at the fine granularity level since the total population 
of every node is between $\eps/3$ and $1+(2\eps/3)$ for a constant $\eps$}.

To continue with the proof, we need to make a sequence of observations about the constructed
graph $G_1$ as follows: 
\begin{description}
\item[(\emph{i})]
An edge-node $u_{i,j}$ can be in a partition just by itself, or with only one of either of the nodes $v_{i,j}$ and $v_{j,i}$.
\item[(\emph{ii})]
If $v_i$ is not selected then $u_{i,j}$ cannot be in the same partition as $v_{i,j}$.
On the other hand, if 
$u_{i,j}$ is in the same partition as $v_{i,j}$ then $v_i$ must be selected.
\item[(\emph{iii})]
By \textbf{(\emph{i})} and \textbf{(\emph{ii})}, 
An edge-node $u_{i,j}$ is in a partition just by itself if and only if 
neither of its end-points, namely nodes $v_i$ and $v_j$, are selected in the corresponding independent set for $G$.
\item[(\emph{iv})]
Consider any maximal independent set $\emptyset\subset V'\subset V$ for $G$ (\EG, the one obtained by the 
obvious greedy solution) having $0<\mu<\nicefrac{n}{2}$ nodes. 
Using \textbf{(\emph{i})}, \textbf{(\emph{ii})} and 
\textbf{(\emph{iii})}, 
the following calculations hold:  
\begin{enumerate}[label=$\triangleright$]
\item
For every node $v_i$ selected in $V'$ with its adjacent nodes being $v_p,v_q,v_r$, we cover
the nodes 
$v_i^0$, $v_i^1$, $v_i^2$, $v_i^3$, $v_i^4$, $v_{i,p}$, $v_{i,q}$, $v_{i,r}$, and 
the three edge-nodes corresponding to the three edges 
$\{v_i,v_p\}$, $\{v_i,v_q\}$, $\{v_i,v_r\}\in E$
using $6$ districts in $G_1$.
\item
For every node $v_i$ \emph{not} selected in $V'$,  
we cover the nodes 
$v_i^0$, $v_i^1$, $v_i^2$, $v_i^3$, $v_i^4$, $v_{i,p}$, $v_{i,q}$, and $v_{i,r}$
using $3$ districts in $G_1$.
\item
Let $E'\subseteq E$ be the set of edges such that \emph{neither} end-points of these edges are
selected in $V'$. Note that $|E'|=(3n/2)-3\mu$, and for every edge $v_{i,j}\in E'$ 
we use one new district for the edge-node $u_{i,j}$.
\end{enumerate}
\end{description}

\begin{lemma}[existence of valid solution]
There is a trivial (not necessarily optimal) valid solution for $G_1$.
\end{lemma}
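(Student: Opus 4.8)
The plan is to exhibit the districting scheme explicitly, and the natural choice is precisely the covering already spelled out in observation~\textbf{(\emph{iv})}, applied to a greedily constructed maximal independent set of $G$. First I would fix a maximal independent set $V'\subseteq V$ of $G$ obtained by the obvious greedy procedure in polynomial time; since $G$ is cubic, its size $\mu:=|V'|$ satisfies $\tfrac{n}{4}\le\mu\le\tfrac{n}{2}$ and $\emptyset\subsetneq V'\subsetneq V$ (we allow $\mu=\tfrac n2$ as well; the formulas below are insensitive to this). Then I would build the partition of $V_1$ exactly as in observation~\textbf{(\emph{iv})}: for each $v_i\in V'$ use the $6$ districts covering its $8$ gadget nodes together with its three incident edge-nodes (the ``$v_i$ is selected'' configuration of Figure~\ref{ex3-fig}(\emph{b})); for each $v_i\notin V'$ use the $3$ districts covering its $8$ gadget nodes (the ``$v_i$ is not selected'' configuration); and for each edge of $G$ with neither endpoint in $V'$, place the corresponding edge-node in its own singleton district.

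Next I would check the three requirements. (1) This is genuinely a partition of $V_1$: each of the $8n$ node-gadget nodes lies in a unique block of its own gadget, and each of the $3n/2$ edge-nodes is assigned exactly once — to a district of its (unique, by independence of $V'$) selected endpoint if one exists, and otherwise to its own singleton district. (2) The number of districts is $6\mu+3(n-\mu)+\bigl(\tfrac{3n}{2}-3\mu\bigr)=\tfrac{9n}{2}=\kappa$, independent of $\mu$. (3) Every block is connected in $G_1$: the node-gadget blocks are connected by construction in Figure~\ref{ex3-fig}(\emph{b}), a singleton edge-node is trivially connected, and an edge-node grouped with a $v_{i,j}$ is connected through the edge $\{v_{i,j},u_{i,j}\}$. (4) The approximately strict partitioning criterion holds after rescaling: each block is one of the $\eeta$-configurations that the gadget values in Figure~\ref{ex3-fig} were designed (using $0<\eps<\tfrac12$) to make the only admissible ones, so $1\le\eeta(\cS_j)\le1+\eps$ for all $j$, and a singleton edge-node block has population exactly $1$, giving $\min_j\eeta(\cS_j)=1$ as stipulated. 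Since $V'$, and hence the entire scheme, is produced in polynomial time, this is the required trivial valid solution.

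I do not anticipate a real obstacle, which matches the lemma's description of the solution as ``trivial''; the only points deserving a little care are verifying that the district count collapses to exactly $\kappa$ for every admissible value of $\mu$ (including the extreme $\mu=\tfrac n2$, where the set of edges with no selected endpoint is empty and the last family of singleton districts disappears), and confirming that at least one district attains population exactly $1$ so that the rescaled $\eps$-approximate equipartition constraint is met with the required equality.
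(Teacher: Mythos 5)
Your proposal is correct and follows essentially the same route as the paper: apply the covering of observation \textbf{(\emph{iv})} to a greedy maximal independent set and verify the district count $6\mu+3(n-\mu)+\bigl(\tfrac{3n}{2}-3\mu\bigr)=\tfrac{9n}{2}=\kappa$, which is exactly the paper's (one-line) argument. Your additional checks of connectivity, the partition property, and the population bounds are fine elaborations of what the paper leaves implicit in the gadget design.
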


\begin{proof}
By \textbf{(\emph{iv})},
the total number of districts used in a maximal independent set is
$6\mu + 3 (n-\mu) + ( (3n/2) - 3\mu) = 9n/2 = \kappa$, 
as required.
\end{proof}

Next, for calculations of the wasted votes and the corresponding efficiency gap,
we remind the reader of the following calculations for a district $\cS_j$ (for any sufficiently
small positive rational number $x$): 
\begin{gather*}
\wv(\cS_j,\mbox{\partya}) =
\left\{
\begin{array}{r l}
   x, & \mbox{if $\aalpha(\cS_j)=\frac{\eeta(\cS_j)}{2} + x$}
\\
[3pt]
\frac{\eeta(\cS_j)}{2} - x, & \mbox{if $\aalpha(\cS_j)=\frac{\eeta(\cS_j)}{2} - x$}
\end{array}
\right.
\\
\wv(\cS_j,\mbox{\partyb}) =
\left\{
\begin{array}{r l}
\frac{\eeta(\cS_j)}{2} - x, & \mbox{if $\aalpha(\cS_j)=\frac{\eeta(\cS_j}{2} + x$}
\\
[3pt]
x, & \mbox{if $\aalpha(\cS_j)=\frac{\eeta(\cS_j)}{2} - x$}
\end{array}
\right.
\end{gather*}
\begin{multline*}
\wv(\cS_j,\mbox{\partya}) - \wv(\cS_j,\mbox{\partyb}) 
\\
=
\left\{
\begin{array}{r l}
2x - \frac{\eeta(\cS_j)}{2}, & \mbox{if $\aalpha(\cS_j)=\frac{\eeta(\cS_j)}{2} + x$}
\\
[3pt]
\frac{\eeta(\cS_j)}{2} - 2x, & \mbox{if $\aalpha(\cS_j)=\frac{\eeta(\cS_j)}{2} - x$}
\end{array}
\right.
\end{multline*}
Consider any maximal independent set $\emptyset\subset V'\subset V$ for $G$ 
having $\nicefrac{n}{4}<\mu\leq\nicefrac{n}{2}$ nodes. 
Using \textbf{(\emph{iv})},
the following calculations hold:  
\begin{enumerate}[label=$\triangleright$]
\item
Every node $v_i$ selected in $V'$ 
contributes the following amount 
to the total value of 
\\
$\sum\limits_{j=1}^{\kappa} (\wv(\cS_j,\mbox{\partya}) - \wv(\cS_j,\mbox{\partyb}) )$:
\begin{gather*}
\xi=
\left( 8\delta - \frac{1}{2} \right) + 
\left( 16\delta - \frac{1}{2} \right) + 
\left( 16\delta - \frac{1+\eps}{2} \right) + 
3 \times \left( \frac{1}{2} - 2\delta \right)
=
34\delta - \frac{\eps}{2}
\end{gather*}
\item
Every node $v_i$ \emph{not} selected in $V'$ 
contributes the following amount 
to the total value of 
\\
$\sum\limits_{j=1}^{\kappa} (\wv(\cS_j,\mbox{\partya}) - \wv(\cS_j,\mbox{\partyb}) )$:
\begin{gather*}
\zeta=
\left( 16\delta - \frac{1+\eps}{2} \right) + 
\left( 16\delta - \frac{1+\eps}{2} \right) + 
\left( 2\delta - \frac{1}{2} \right)
=
34\delta -\eps - \frac{3}{2}
\end{gather*}
\item
Every edge in $E$ such that \emph{neither} end-points of the edge are
selected in $V'$
contributes the following amount 
to the total value of 
$\sum\limits_{j=1}^{\kappa} (\wv(\cS_j,\mbox{\partya}) - \wv(\cS_j,\mbox{\partyb}) )$:
\begin{gather*}
\eta=
\delta
-
\frac{1+\frac{2\eps}{3}}{2} 
=
\delta  - \frac{\eps}{3} - \frac{1}{2}
\end{gather*}
\item
Consequently, adding all the contributions, 
we get the following value for 
$\sum\limits_{j=1}^{\kappa} (\wv(\cS_j,\mbox{\partya}) - \wv(\cS_j,\mbox{\partyb}) )$
corresponding to an independent set of $\mu$ nodes: 
\begin{multline*}
\Upsilon(\mu) 
= 
\mu\xi + (n-\mu)\zeta + \left( \frac{3n}{2} - 3\mu \right) \eta  
\\
=
\left( 34\mu\delta - \frac{\mu\eps}{2} \right) + 
(n-\mu) \left( 34\delta -\eps - \frac{3}{2} \right) +
\left( \frac{3n}{2} - 3\mu \right) \left( \delta - \frac{\eps}{3}  - \frac{1}{2} \right) 
\\
=
3\mu + 
\left( \frac{3\eps}{2} - 3\delta \right) \mu 
+
\left( \frac{71\delta}{2} - \frac{3\eps}{2}  - \frac{9}{4} \right) n
\end{multline*}
\end{enumerate}
Now we note the following properties of the quantity $\Upsilon(\mu)$:
\begin{enumerate}[label=$\triangleright$]
\item
Since $\delta = n^{-3}/100$ and 
$\nicefrac{n}{4}<\mu\leq\nicefrac{n}{2}$, we have 
$\Upsilon(\mu) < 0$ and therefore $|\Upsilon(\mu)| = -\Upsilon(\mu)$.
\item
Consequently, 
$
|\Upsilon(\mu)| - |\Upsilon(\mu-1)|
=
\Upsilon(\mu-1) - \Upsilon(\mu)
= -3 - \frac{3\eps}{2} + 3\delta 
$
\end{enumerate}
The last equality then leads to the following two statements that complete the proof for $\NP$-hardness:
\begin{enumerate}[label=$\blacktriangleright$,leftmargin=*]
\item
If $G$ has an independent set of $\nu$ nodes then 
$\mathsf{Effgap}_{\kappa}(\cS,\cS_1,\dots,\cS_\kappa)=|\Upsilon(\nu)|$.
\item
If every independent set of $G$ has at most $\nu-1$ nodes then 
$\mathsf{Effgap}_{\kappa}(\cS,\cS_1,\dots,\cS_\kappa)\geq|\Upsilon(\nu-1)|>|\Upsilon(\nu)|+2$.
\end{enumerate}

\section{Concluding remarks}

The computational complexity results in this article (and also in~\cite{CDPAS18}) 
may be considered as a beginning to gerrymandering from a \emph{TCS} point of view. 
While some computational complexity aspects of these problems are settled, a plethora
of interesting \emph{TCS}-related questions remaining. Some of these questions are as follows.
\begin{enumerate}[label=$\triangleright$]
\item
The computational complexity of optimizing the partisan bias measure remains wide open. 
Of special interest is the \emph{uniform population shift} model for which 
$\beta_1=\dots=\beta_\kappa = \nicefrac{\alpha}{\kappa}$.
\item
Does introducing the additional constraint of geometric compactness render the computation 
of the gerrymandering objectives more tractable? Theorem $11$ of~\cite{CDPAS18}
provides a partial (affirmative) answer to this question for restricted versions of efficiency 
gap calculation problem.
\item
Is there a constant factor approximation algorithm for computing the efficient gap measure for 
inputs at a fine granularity level? We conjecture this to be true but have been unable to prove it yet.
\end{enumerate}

\section*{Acknowledgments}

We thank Laura Palmieri and Anastasios Sidiropoulos for useful discussions.



\end{document}